\documentclass[webpdf,contemporary,large]{oup-authoring-template}%

\graphicspath{{Fig/}}

\usepackage{multirow}

\usepackage{wrapfig}

\usepackage{graphicx}
\usepackage[]{geometry}
\usepackage{xcolor}
\definecolor{darkgreen}{RGB}{34, 139, 34} 
\definecolor{ipcolor}{RGB}{242,170,60}
\definecolor{bpcolor}{RGB}{55,126,247}
\usepackage{xspace}
\usepackage{amssymb,amsmath,epsfig}
\usepackage{amsthm}
\usepackage{hyperref}  
\hypersetup{
    colorlinks=true,
    linkcolor=black, 
    urlcolor=blue,  
    citecolor=blue  
}

\usepackage{verbatim}
\algrenewcommand\algorithmicindent{0.5em} 

\renewcommand{\cite}[1]{\citep{{#1}}}  

\usepackage{hyperref}\hypersetup{colorlinks=true, citecolor=[rgb]{0,0,1}, urlcolor=black}

\usepackage{courier}
\usepackage{float}
\usepackage[skip=4pt]{caption}
\usepackage{subcaption}
\usepackage{tikz}
\usetikzlibrary{shapes,snakes}
\usepackage[capitalise]{cleveref}
\usepackage{tablefootnote}
\usepackage{tabularx}
\usepackage{seqsplit} 

\usetikzlibrary{shapes.geometric, arrows, positioning}

\tikzstyle{process} = [rectangle, rounded corners, minimum width=2.0cm, minimum height=1cm, text centered, draw=black, fill=blue!30]
\tikzstyle{input} = [text centered]
\tikzstyle{arrow} = [thick,->,>=stealth]
\tikzstyle{output} = [text centered]
\tikzstyle{condition} = [text centered, font=\small\itshape]

\usepackage{threeparttable}

\theoremstyle{thmstyleone}%
\newtheorem{theorem}{Theorem}
\newtheorem{proposition}[theorem]{Proposition}%
\theoremstyle{thmstyletwo}%
\theoremstyle{thmstylethree}%
\newtheorem{definition}{Definition}

\newcommand{\argmin}{\operatornamewithlimits{\mathrm{argmin}}} 
\newcommand{\nucA}{\ensuremath{\text{\sc A}}}
\newcommand{\nucC}{\ensuremath{\text{\sc C}}}
\newcommand{\nucG}{\ensuremath{\text{\sc G}}}
\newcommand{\nucU}{\ensuremath{\text{\sc U}}}
\newcommand{\pairs}{\ensuremath{\mathit{pairs}}\xspace}
\newcommand{\unpaired}{\ensuremath{\mathit{unpaired}}\xspace}
\newcommand{\MFE}{\ensuremath{\text{\rm MFE}}\xspace}
\newcommand{\UMFE}{\ensuremath{\text{\rm uMFE}}\xspace}
\newcommand{\NED}{\ensuremath{\text{NED}}\xspace}

\newcommand{\new}{\ensuremath{\text{new}}\xspace}
\newcommand{\prob}{\ensuremath{\text{prob}}\xspace}
\newcommand{\rt}{\ensuremath{\text{root}}\xspace}
\newcommand{\step}{\ensuremath{\text{\rm step}}\xspace}

\newcommand{\LP}{\ensuremath{\mathit{loops}}\xspace}

\newcommand{\contract}{\ensuremath{\mathrm{Intersection}}}

\newcommand{\D}{\ensuremath{\mathrm{\Delta}}}
\newcommand{\DG}{\ensuremath{\mathrm{\Delta} G^\circ}}
\newcommand{\DDG}{\ensuremath{\mathrm{\Delta\Delta} G^\circ}}

\newcommand{\blangle}{\ensuremath{\boldsymbol{\langle}}}
\newcommand{\brangle}{\ensuremath{\boldsymbol{\rangle}}}
\newcommand{\proj}{\ensuremath{\vdash}}

\DeclareMathOperator{\defeq}{\stackrel{\Delta}{=}}

\renewcommand{\vec}[1]{\ensuremath{\boldsymbol{{#1}}}\xspace}

\newcommand{\vecx}{\ensuremath{\vec{x}}\xspace}
\newcommand{\vecy}{\ensuremath{\vec{y}}\xspace}
\newcommand{\vecz}{\ensuremath{\vec{z}}\xspace}

\newcommand{\ystar}{\ensuremath{\vec{y^\star}}\xspace}

\newcommand{\setxi}{\ensuremath{\hat{X}}\xspace}

\newcommand{\loops}{\ensuremath{\mathit{loops}}\xspace}

\newcommand{\um}{undesignable motif\xspace}
\newcommand{\m}{motif\xspace}
\newcommand{\M}{\ensuremath{\vec{m}}\xspace}
\newcommand{\vecm}{\ensuremath{\vec{m}}\xspace}
\newcommand{\ipairs}{\ensuremath{\mathit{ipairs}}\xspace}
\newcommand{\bpairs}{\ensuremath{\mathit{bpairs}}\xspace}
\newcommand{\mstar}{\ensuremath{\vec{m^\star}}\xspace}

\renewcommand{\emptyset}{\varnothing}
\newcommand{\card}{\ensuremath{\mathit{card}}\xspace}

\newcommand{\yrival}{\ensuremath{\mathcal{Y}_{\text{rival}}}\xspace}
\newcommand{\pnode}{\ensuremath{N_{\text{parent}}}\xspace}
\newcommand{\chnode}{\ensuremath{N_{\text{child}}}\xspace}
\newcommand{\newnode}{\ensuremath{N_{\text{new}}}\xspace}

\newcommand{\ours}{FastDesign\xspace}
\newcommand{\highlight}[1]{\emph{#1}}

\begin{document}

\journaltitle{Journal Title Here}
\DOI{DOI HERE}
\copyrightyear{2022}
\pubyear{2019}
\access{Advance Access Publication Date: Day Month Year}
\appnotes{Paper}

\firstpage{1}


\title{Fast and Versatile RNA Design via Motif-level Divide-and-Conquer and Structure-level Rival Search}

\author[1]{Tianshuo Zhou\ORCID{0009-0008-4804-0825}}
\author[3,4,5]{David H. Mathews\ORCID{0000-0002-2907-6557}}
\author[1,2,$\ast$]{Liang Huang\ORCID{0000-0001-6444-7045}}

\authormark{Zhou et al.}

\address[1]{\orgdiv{School of EECS}}
\address[2]{\orgdiv{Dept.~of Biochemistry \& Biophysics}, \orgname{Oregon State University}, \country{USA}}
\address[3]{\orgdiv{Dept.~of Biochemistry \& Biophysics}}
\address[4]{\orgdiv{Center for RNA Biology}}

\address[5]{\orgdiv{Dept.~of Biostatistics and Computational Biology}, \orgname{University of Rochester Medical Center}, \country{USA}}

\corresp[$\ast$]{Corresponding author. \href{email:email-id.com}{liang.huang.sh@gmail.com}}

\received{Date}{0}{Year}
\revised{Date}{0}{Year}
\accepted{Date}{0}{Year}



\abstract{
\textbf{Motivation:} RNA design aims to identify RNA sequences that fold into a target secondary structure. This task is challenging in terms of computational efficiency. Most existing methods focus on either minimum free energy (\MFE)-based or ensemble-based metrics, leaving a gap for a unified approach that performs well across both. We introduce a fast and versatile RNA design algorithm inspired by our previous work on the undesignability of RNA structures and motifs (i.e., sets of contiguous structural loops). Our approach decomposes a target structure into a tree of sub-targets where each leaf node corresponds to a motif and each internal node corresponds to a substructure. We first design partial sequences for each motif, then these partial sequences are selectively and recursively combined via the \emph{cube pruning} strategy borrowed from computational linguistics, enabling effective optimization of ensemble-based metrics. Finally, a novel whole-structure rival search further refines sequences to suppress misfolded alternatives and enhance \MFE-based performance.  \\
\textbf{Results:} Our method is highly efficient and also achieves state-of-the-art results on native RNAsolo structures and the Eterna100 benchmark, excelling in both ensemble- and  \MFE-based metrics. Additionally, it substantially improves the design of long-structure benchmark derived from 16S rRNA, increasing average folding probability from 0.18 to 0.39 with an order-of-magnitude speedup, demonstrating its effectiveness and scalability. \\
\textbf{Availability:}  Source code and data are available at: \url{https://github.com/shanry/FastDesign}. \\
\textbf{Supplementary information:} Supplementary text and data are available in a separate PDF. \\
\textbf{Contact:} \href{liang.huang.sh@gmail.com}{liang.huang.sh@gmail.com}
} 
\keywords{RNA Design,   Inverse Folding,   Designability,  Structural Motif , Rival Structure.}

\maketitle

\section{Introduction}

As structures dictate functions in structural biology, it is crucial to study the relations between biological sequences and their structures for non-coding RNAs~\cite{fiannaca+:2017nrc}.
Given a target structure, RNA design aims to find RNA sequences that can fold into that structure. In other words, RNA design is the inverse problem of RNA folding.
However, RNA design is considered NP-hard~\cite{bonnet+:2020designing}.

While numerous RNA design methods~\cite{portela:2018unexpectedly,rubio2018multiobjective,rubio+:2023eM2dRNAs,zhou+:2023samfeo} have been developed, most recent works have been focused on improving the \MFE-based metrics, i.e., how many of the structures (or puzzles) are solved by the \MFE or unique \MFE (\UMFE) criterion. 
Despite those efforts, existing RNA design methods have limitations in computational efficiency, design quality and technical explainability. 

First, optimization-based methods for RNA design rely on iteratively evaluating (folding) and mutating RNA sequences. The main time cost comes from folding entire RNA sequences, which has a cubic time complexity. As target structures get longer, the running time will increase rapidly.
Secondly, most RNA methods focus on the MFE criterion, i.e., finding sequences such that the target structure has minimum free energy (\MFE). 
However, this often causes a very low equilibrium probability for the target structure.
Last but not least, existing methods provide limited explainability on why some structures are hard or impossible to design. 

To address the above drawbacks, we introduce a fast and versatile RNA design method. This work is inspired by two insights from the works on RNA undesignability~\citep{zhou+:2026theory,zhou+:2025scalable,zhou+:2024undesignable,yao:2021thesis}: (1) The whole structure designability is restricted by the designability of local structural motifs. (2) \highlight{Rival structures} can help determine the necessary conditions of successful \MFE designs.
Our approach first leverages local designability to decompose a target structure into a tree of subtargets. We then adapt our earlier structure-level design algorithm, SAMFEO, to generate partial sequences for target motifs on leaf nodes. These partial sequences are recursively combined to form larger RNA sequences. To avoid the combinatorial explosion, we borrow the idea of \emph{cube pruning} from computational linguistics~\cite{huang+chiang:2007} to efficiently explore combinatorial search space. Such a motif-level divide-conquer-combine framework (Section~\ref{sec:motif-level}) enables effective optimization of ensemble-based metrics. Finally, a novel structure-level rival search (Section~\ref{sec:rival-search}) refines the resulting sequences to suppress misfolded alternatives and enhance \MFE-based performance.  
Our main contributions are:
\begin{enumerate}
\item \textbf{Motif-level divide-conquer-combine framework.} We propose a general motif-level divide-conquer-combine framework to enable effective ensemble optimization for RNA design. 
The framework is grounded in designability theory, leading to an interpretable and principled decomposition strategy.
To efficiently combine local designs into larger ones, we introduce a \emph{cube pruning} algorithm that efficiently explores the combinatorial design space with near-optimal efficiency.

\item \textbf{Structure-level rival search for \MFE-based design.} We develop a novel structure-level rival search algorithm to enhance \MFE-based evaluation metrics. 
The algorithm is driven by energy-difference calculations between the target structure and its rival structures, enabling aggressive pruning of the \MFE design space and substantially improving design effectiveness.

\item \textbf{Unified and scalable RNA design pipeline.}
We unify the motif-level divide-conquer-combine framework and the structure-level rival search algorithm into a single, end-to-end RNA design pipeline, as illustrated in Fig.~\ref{fig:framework}.
Our implementation is fast, flexible, and versatile: it achieves state-of-the-art ensemble-based performance, solves the largest number of RNAsolo structures and Eterna100 puzzles under \MFE-based criteria, and demonstrates clear advantages in both scalability and design quality on long structures ($>1000$~\emph{nt}).
\end{enumerate}

\vspace{-0.5cm}

\begin{figure*}[ht]
    \centering
    \vspace{-0.2cm}
    \includegraphics[width=\textwidth]{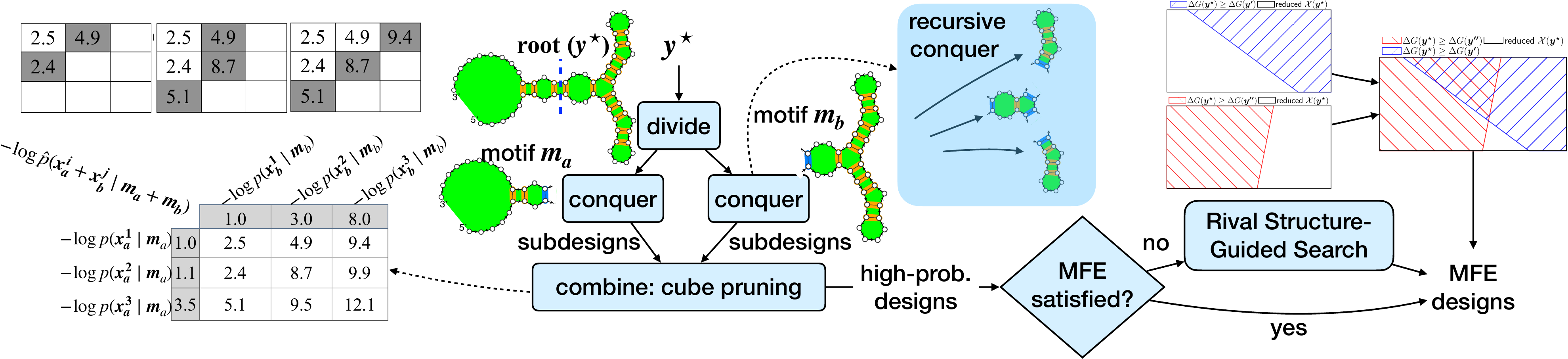}
    \caption{Overview of our unified RNA design framework. The method first applies a divide–conquer–combine strategy to generate high-probability designs, followed by a rival-structure–guided search when the MFE criterion is not satisfied, ultimately yielding MFE designs.} 
    \label{fig:framework}
\end{figure*}

\vspace{-0.25cm}

\section{Preliminaries on RNA Design}
\subsection{RNA Secondary Structure}
\begin{figure}[htbp]
    \centering
    \vspace{-0.2cm}
    \includegraphics[width=0.8\linewidth]{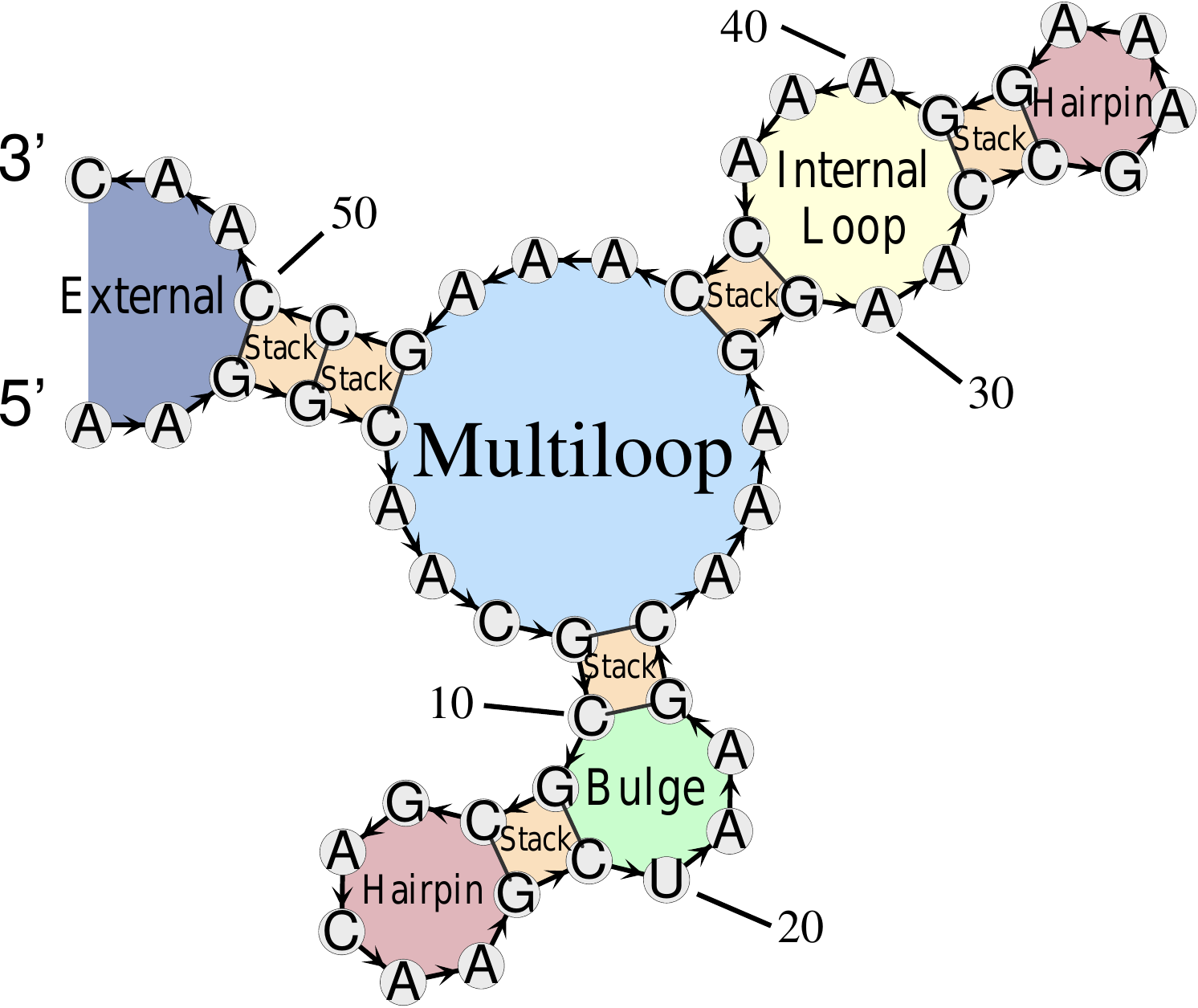}
    \caption{An example of secondary structure and loops.} 
    \label{fig:loops}
\end{figure}
An RNA sequence $\vecx$ of length $n$ is specified as a string of  base nucleotides $\vecx_1\vecx_2\dots \vecx_n,$  where $\vecx_i \in \{\nucA, \nucC, \nucG, \nucU\}$ for $i=1, 2,...,n$. The length of $\vecx$ can also be denoted as $|\vecx|$.
A secondary structure~$\mathcal{P}$  for $\vecx$ is a set of paired indices where each pair $(i, j) \in \mathcal{P}$ indicates two distinct bases $\vecx_i \vecx_j \in \{\nucC\nucG,\nucG\nucC,\nucA\nucU,\nucU\nucA, \nucG\nucU,\nucU\nucG\}$ and each index from $1$ to $n$ can only be paired once. A secondary structure is pseudoknot-free if there are no two pairs $(i, j)\in \mathcal{P}\text{ and }(k, l)\in~\mathcal{P}$ such that  $i<k<j<l$.  In short,  a pseudoknot-free secondary structure is a properly nested set of pairings in an RNA sequence.  Alternatively, $\mathcal{P}$ can be represented as a string~$\vecy=\vecy_1\vecy_2\dots \vecy_n$,  where a pair of indices $(i, j) \in~\mathcal{P}$ corresponds to $\vecy_i=``("$, $\vecy_j=``)"$ and any unpaired index $k$ corresponds to $\vecy_k=``."$. The unpaired indices in $\vecy$ are denoted as $\unpaired(\vecy)$ and the set of paired indices in $\vecy$ is denoted as $\pairs(\vecy)$, which is equal to~$\mathcal{P}$. 
Here we do not consider pseudoknots.

The \emph{ensemble} of an RNA sequence $\vecx$  is the set of all secondary structures that  $\vecx$ can possibly fold into, denoted as $\mathcal{Y}(\vecx)$. The \highlight{free energy (change)} $\DG(\vecx, \vecy)$ is used to characterize the stability of $\vecy \in \mathcal{Y}(\vecx)$. The lower the free energy change,~$\DG(\vecx, \vecy)$, the more stable the secondary structure $\vecy$ for $\vecx$. 

A structure \vecy is composed of a set of loops denoted as $\LP(\vecy)$, as shown in Fig.~\ref{fig:loops}. See Supplementary Section~\ref{sec:loops} for detailed explanation of different loop types. 
 A \highlight{motif} is defined as a contiguous set of loops in a structure (See Supplementary Section~\ref{sec:motif} and our prior work~\cite{zhou+:2024undesignable} for details).
 A \highlight{substructure} is a special motif that can be represented as a substirng of the dot-bracket string of a structure.
 The free energy change of a secondary structure $\vecy$ is the sum of the free energy change of each loop, i.e., $\DG(\vecx, \vecy) = \sum_{\vecz \in \loops(\vecy)} \DG(\vecx, \vecz)$, where each term $ \DG(\vecx, \vecz)$ is the energy for loop \vecz~\cite{mittal+:2024nndb}. 
The structure with the \highlight{minimum free energy} is  the most stable structure in the ensemble $\mathcal{Y}(\vecx)$. The minimum free energy of $\mathcal{Y}(\vecx)$ is denoted as $\MFE(\vecx)$ and defined as $\MFE(\vecx)=\min_{\vecy \in \mathcal{Y}(\vecx)} \DG(\vecx, \vecy)$.

\subsection{\MFE-based RNA Design}
 A structure $\ystar$ is an \MFE structure of $\vecx$ if and only if
\begin{equation}
 ~\forall \vecy \in \mathcal{Y}(\vecx)  \text{ and }\vecy \ne \ystar , \DG(\vecx, \ystar) \leq  \DG(\vecx, \vecy). \label{eq:mfe}
\end{equation}

Given a target structure $\ystar$, \MFE-based RNA design aims to find suitable RNA sequence $\vecx$ such that $\ystar$ is an \MFE structure of $\vecx$. For convenience, we define $\mathcal{X}(\vecy)$ as the set of all RNA sequences whose ensemble contains $\vecy$, i.e., $\mathcal{X}(\vecy)=\{ \vecx \mid \vecy \in \mathcal{Y}(\vecx) \}$. Here we have a more strict definition of \MFE criterion following previous studies~\cite{zhou+:2023samfeo}, i.e., $\vecx$ is a correct design if and only if $\vecy$ is the only \MFE structure of $\vecx$, which we call unique \MFE(\UMFE) criterion. Formally, $\vecy$ is the $\UMFE$ structure of \vecx if and only if
\begin{equation}
 ~\forall \vecy \in \mathcal{Y}(\vecx)  \text{ and }\vecy \ne \ystar , \DG(\vecx, \ystar) <  \DG(\vecx, \vecy). \label{eq:umfe}
\end{equation}

From the perspective of optimization, the satisfaction of \UMFE criterion requires that the structural distance between target structure $\ystar$ and \UMFE structure of $\vecx$ is minimized to $0$. 
The structural distance between two structures $\vecy'$ and $\vecy''$ is defined as
$d(\vecy', \vecy'') = n - 2\cdot |\pairs(\vecy')\cap \pairs(\vecy'')|  - |\unpaired(\vecy')\cap \unpaired(\vecy'')|, \label{eq:dist}$
where $\vecy'$ and $\vecy''$ have the same length $|\vecy'|=|\vecy''|=n$.

\subsection{Ensemble-based RNA Design}

\subsubsection{Equilibrium probability}
However, structure distance is not able to capture the equilibrium probability of a sequence folding into the target structure, which is defined based on the partition function, $Q(\vecx)$,
\begin{equation}
p(\vecy \mid \vecx) = \frac{e^{-\DG(\vecx, \vecy)/RT}}{Q(\vecx)} = \frac{e^{-\DG(\vecx, \vecy)/RT}}{\sum_{\vecy' \in \mathcal{Y}(\vecx)}e^{-\DG(\vecx, \vecy')/RT}}. \label{eq:prob}
\end{equation}
\subsubsection{Ensemble Defect}
The \emph{ensemble defect} is proposed to probabilistically sum up the structure distances between the target structure and other structures in the ensemble~\cite{Dirks+:2004, Zadeh+:2010}. Ensemble defect can be normalized to between 0 and 1 (\emph{normalized ensemble defect} or NED),
\begin{equation}
\hspace{-1.3cm}
\begin{aligned}
\NED(\vecx, \ystar) 	& = \frac{1}{n} \sum_{\vecy\in \mathcal{Y}(\vecx)} d(\ystar, \vecy) \cdot p(\vecy \mid \vecx). \label{eq:ned}
\end{aligned}\hspace{-.8cm}
\end{equation}
\vspace{-0.7cm}

\section{Motif-level Divide-Conquer-Combine}\label{sec:motif-level}

\begin{figure}[htbp]
	\centering
        \includegraphics[width=0.7\linewidth]{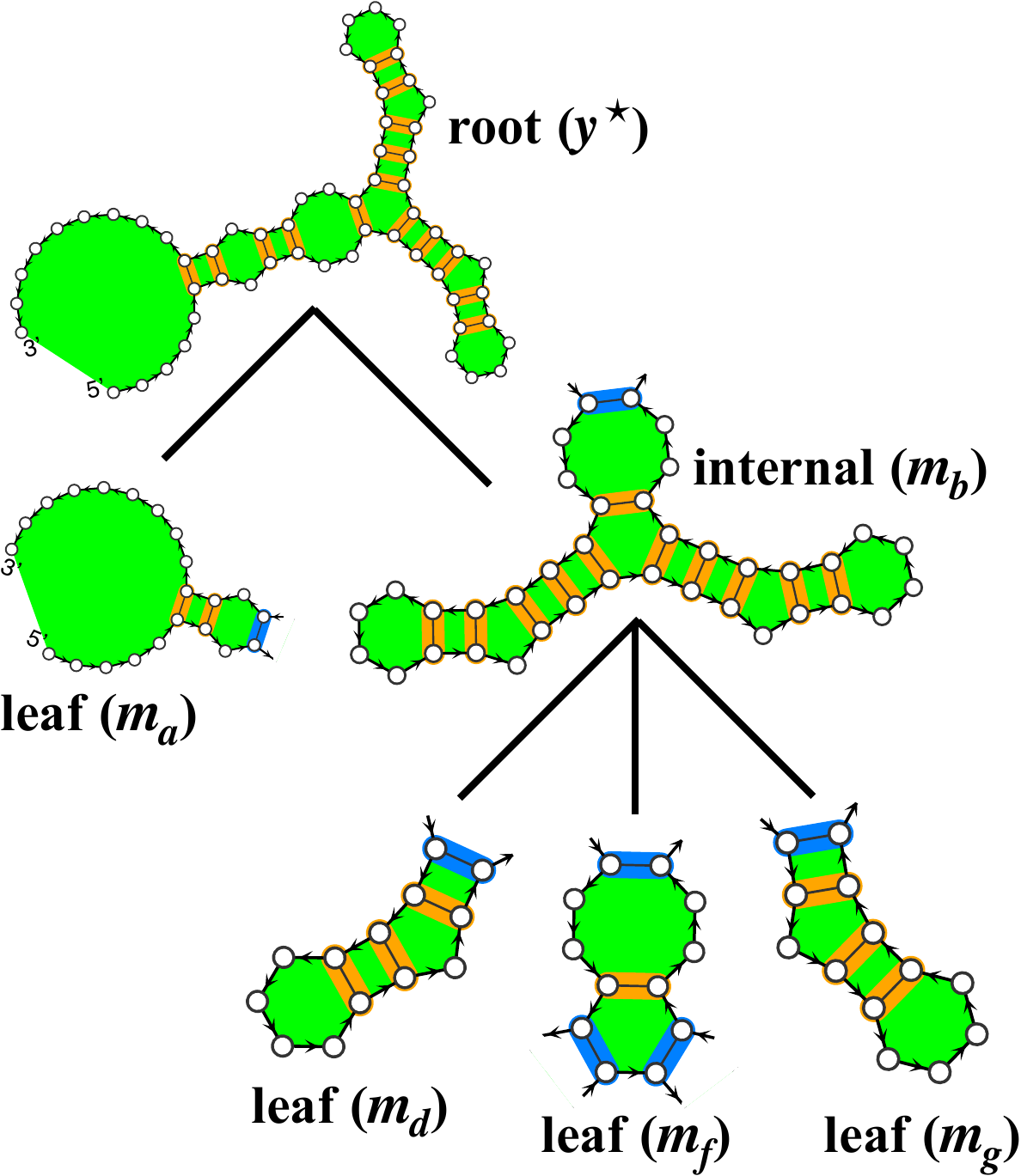}
        \caption{A target structure is decomposed into a tree of subtargets.}
         \label{fig:dccf}
\end{figure}
%

The divide-conquer-combine framework is illustrated in Fig.~\ref{fig:framework}.
Given a target RNA secondary structure, the framework recursively decomposes it into smaller subtargets (substructures or motifs), which are then designed and combined bottom-up.
Internal nodes correspond to larger substructures whose designs are obtained by combining the designs of their children,
while Leaf nodes correspond to motifs and are designed from scratch using an adapted version of SAMFEO~\cite{zhou+:2023samfeo} (Section~\ref{sec:conquer}), 

The complete procedure is summarized in Supplementary Algorithm~\ref{alg:dcc}, which invokes Supplementary Algorithms~\ref{alg:samfeo-motif} and~\ref{alg:samfeo}.

\subsection{Divide: Designability-driven  Decomposition}\label{sec:decompose}


In this subsection, we introduce a \emph{designability-driven decomposition} strategy, inspired by recent advances in the study of undesignable RNA motifs~\citep{yao:2021thesis,zhou+:2024undesignable,zhou+:2025scalable}.
The key idea is to decompose a structure at locations where the interaction between subparts is likely to be stable, thereby minimizing adverse coupling effects during recombination.

\begin{figure}[htbp]
    \begin{subfigure}[b]{0.20\textwidth}
        \includegraphics[width=\textwidth]{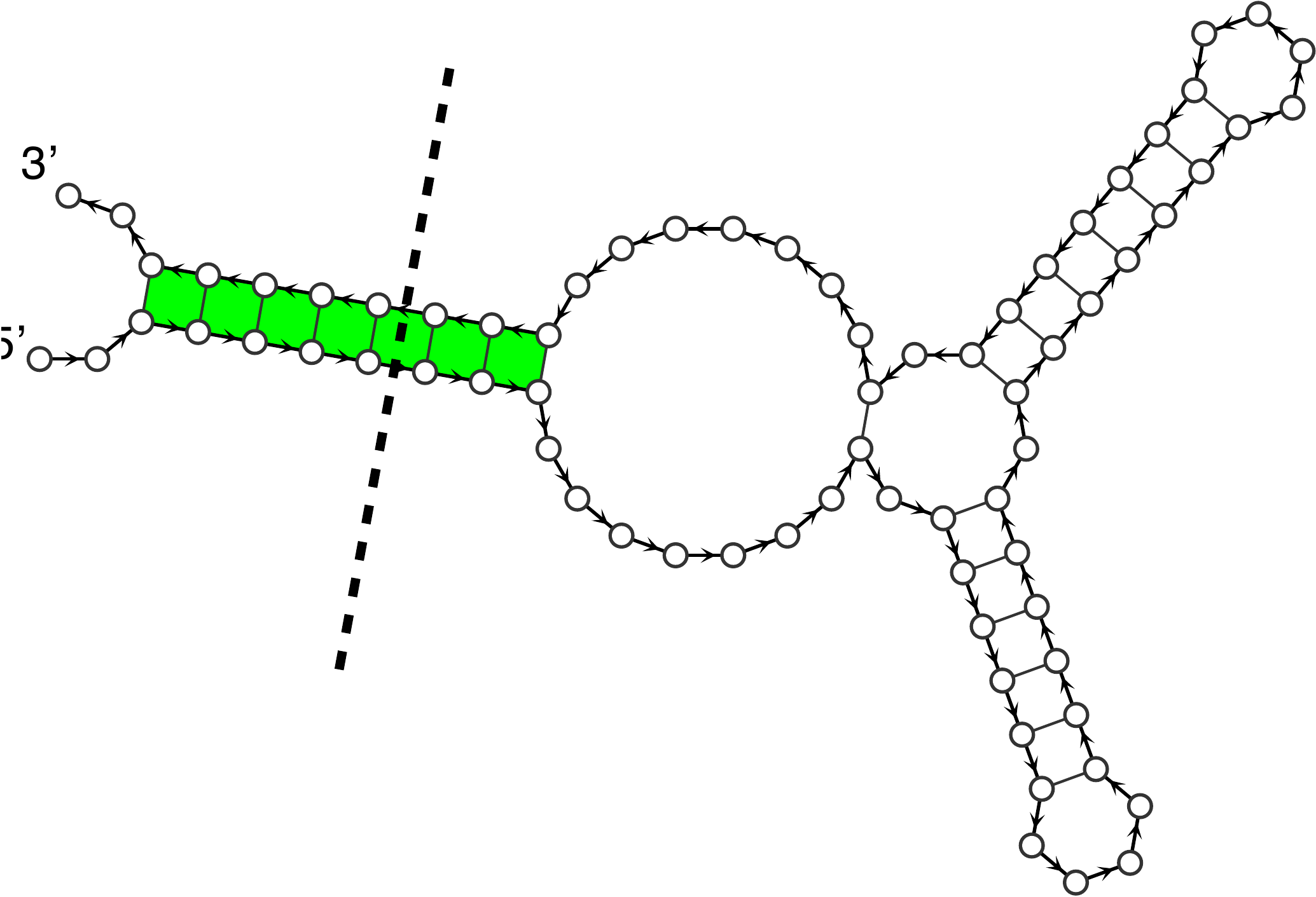}
        \vspace{-0.1cm}
        \caption{good split}
        \label{fig:eterna86}
    \end{subfigure}
    \begin{subfigure}[b]{0.13\textwidth}
        \includegraphics[width=\textwidth]{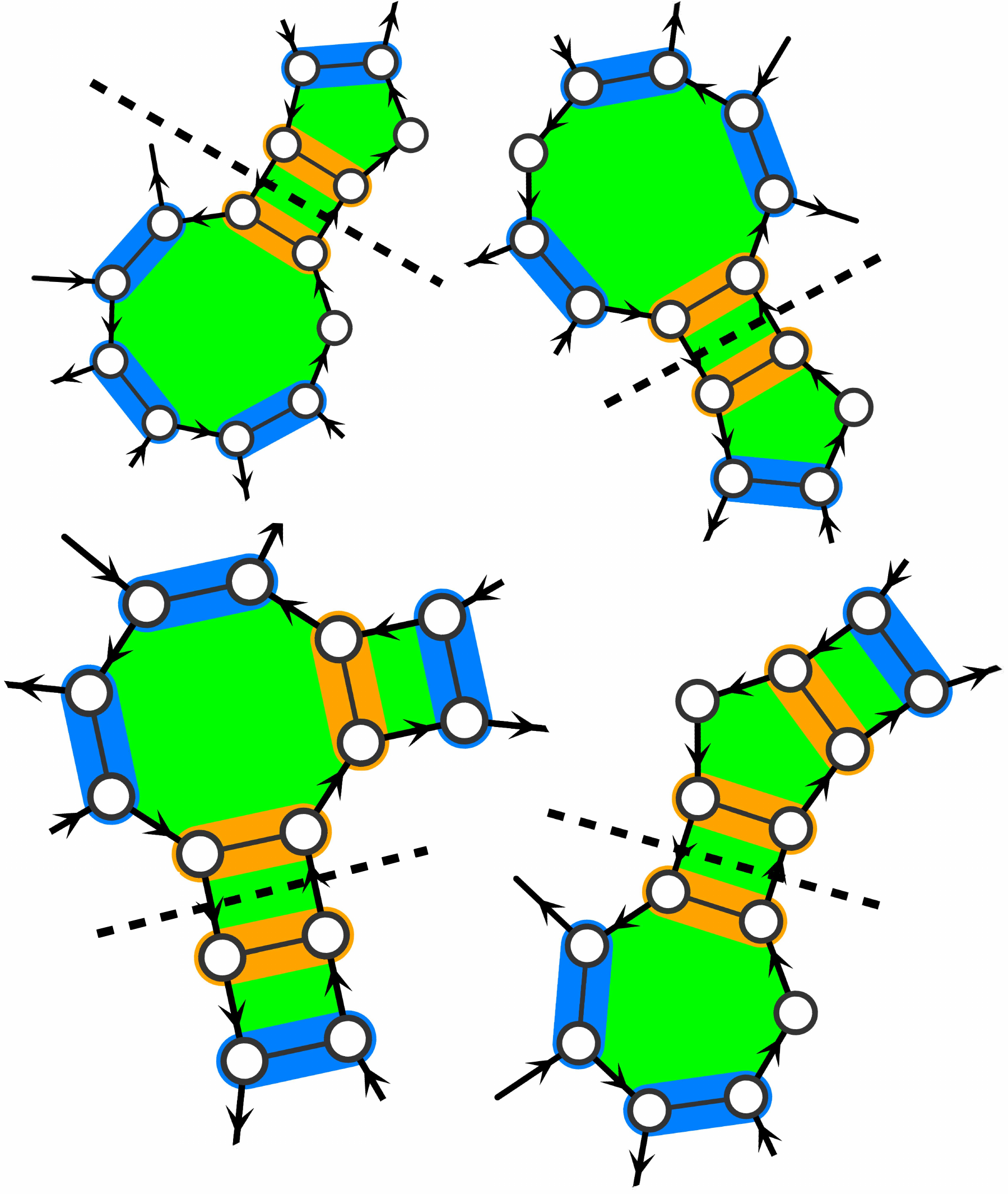}
        \vspace{-0.2cm}
        \caption{}
        \label{fig:motifs2x2}
  \end{subfigure}
  \begin{subfigure}[b]{0.10\textwidth}
  	\centering
        \includegraphics[width=0.8\textwidth]{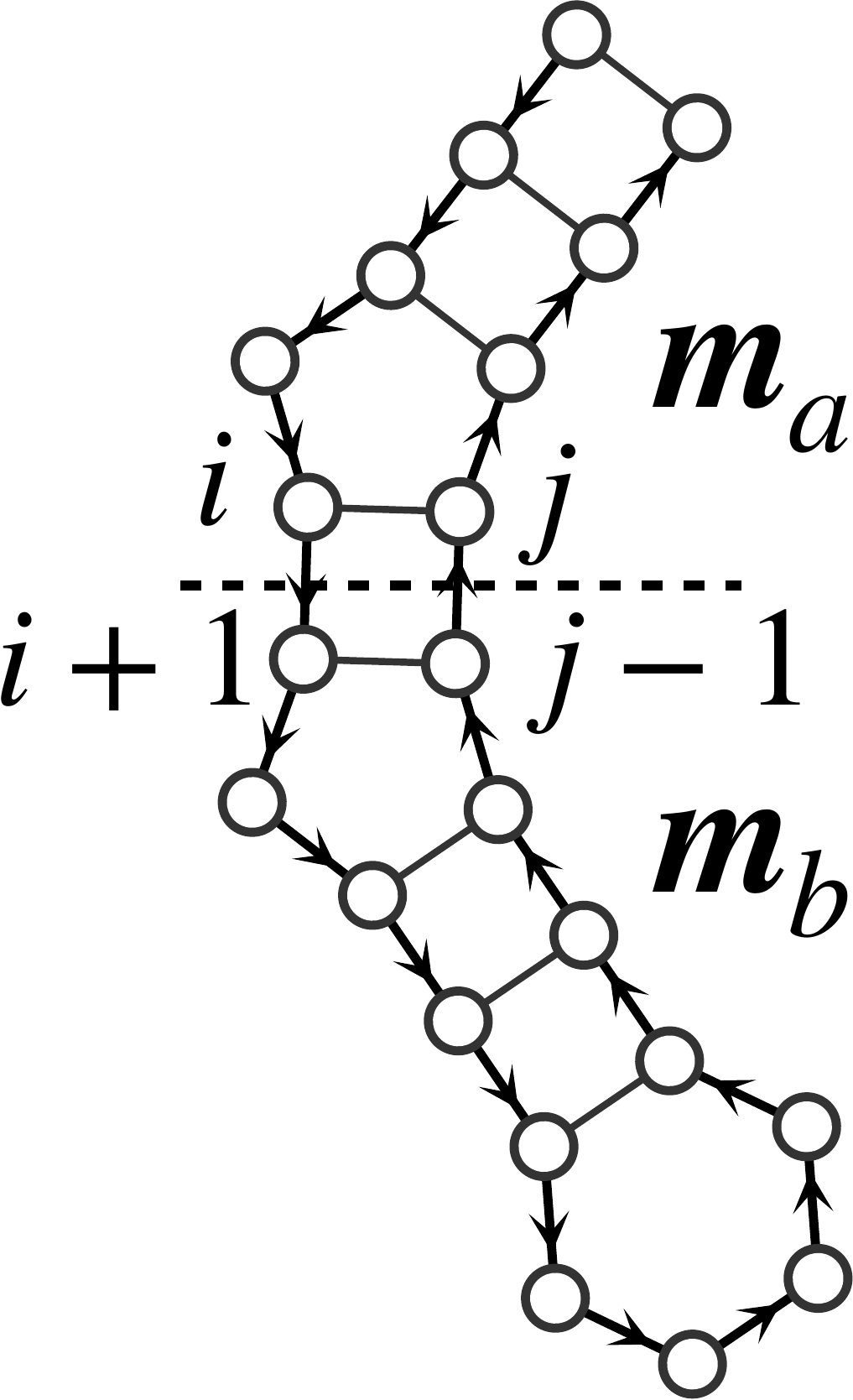}
        \vspace{-0.2cm}
        \caption{}
        \label{fig:motifab}
  \end{subfigure}
  \caption{(a) Bad split;  (b) Good split; (c) More \emph{easy-to-design} motifs; (d) A motif split into $\vecm_a$ and $\vecm_b$.}
\end{figure}


Without loss of generality, consider splitting a structure $\vecy$ (or a structural motif $\vecm$) into two submotifs $\vecm_a$ and $\vecm_b$, as shown in Fig.~\ref{fig:motifab}, denoted by
$\vecy = \vecm_a + \vecm_b$ (or $\vecm = \vecm_a + \vecm_b$).
Partial designs $\vecx_a$ and $\vecx_b$ can then be concatenated to form a complete design
$\vecx = \vecx_a + \vecx_b$.

The goal of such a decomposition is to achieve a high global folding probability
$p(\vecm_a + \vecm_b \mid \vecx_a + \vecx_b)$
by combining locally successful designs with high probabilities
$p(\vecm_a \mid \vecx_a)$ and $p(\vecm_b \mid \vecx_b)$.

Ideally, if the objective $-\log p(\vecm \mid \vecx)$ were perfectly additive, i.e.,
$$-\log p(\vecm_a +  \vecm_b \mid \vecx_a + \vecx_b)=-\log p(\vecm_a \mid \vecx_a) - \log p(\vecm_b \mid \vecx_b),$$
then high-quality local designs would always yield a high-quality global design.
However, this linearity does not hold in the Turner RNA folding model.
Instead, we observe a systematic one-sided deviation, formalized below.

\begin{theorem}\label{theo:delta}
\begin{equation}
\begin{aligned}
 &  -\log {p} (\vecm_a +  \vecm_b \mid \vecx_a + \vecx_b) \\
= &- \!\log {p} (  \!\vecm_a  \!\mid  \!\vecx_a)  \!- \!\log {p} ( \vecm_b  \!\mid  \!\vecx_b) + \delta \text{ for some } \delta > 0, \label{eq:delta}
 \end{aligned}
 \end{equation}
where $\delta$ quantifies the \emph{risk} induced by the decomposition $\vecm = \vecm_a + \vecm_b$.
\end{theorem}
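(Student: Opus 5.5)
The plan is to write each probability as a Boltzmann factor over a partition function and compare the two sides term by term. Two things need comparing: the numerators (target energies) and the denominators (partition functions).

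\textbf{Energies.} Since $\DG$ is a sum over loops, I will first check that the loops of $\vecm_a + \vecm_b$ are exactly the loops of $\vecm_a$ together with the loops of $\vecm_b$. Each loop's energy depends only on the nucleotides it contains, which are supplied by $\vecx_a$ or $\vecx_b$. Hence
$$\DG(\vecx_a+\vecx_b,\vecm_a+\vecm_b)=\DG(\vecx_a,\vecm_a)+\DG(\vecx_b,\vecm_b).$$
In the exterior-loop/concatenation setting this additivity is exact. For a general motif split the cut loop must be treated with some care (possible dangle or terminal-mismatch terms at the junction). I will assume the split is made so that no loop straddles it.

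\textbf{Partition functions.} Substituting into the definition of $p(\vecy\mid\vecx)$ and using the energy additivity, the target energies cancel and
$$\delta=\log Q(\vecx_a+\vecx_b)-\log Q(\vecx_a)-\log Q(\vecx_b).$$
So it suffices to prove $Q(\vecx_a+\vecx_b)>Q(\vecx_a)\,Q(\vecx_b)$.

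\begin{itemize}
\item I will build an injection from $\mathcal{Y}(\vecx_a)\times\mathcal{Y}(\vecx_b)$ into $\mathcal{Y}(\vecx_a+\vecx_b)$ by $(\vecy_a,\vecy_b)\mapsto\vecy_a\vecy_b$, the concatenated dot-bracket string, which is again pseudoknot-free.
\item By the same loop decomposition, this map is energy-additive. Thus the product $Q(\vecx_a)Q(\vecx_b)$ appears as a sub-sum of $Q(\vecx_a+\vecx_b)$, giving $\ge$.
\item For strictness, I will exhibit one structure of $\vecx_a+\vecx_b$ outside the image: any structure with a pair $(i,j)$, $i$ in the $\vecx_a$ part and $j$ in the $\vecx_b$ part. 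Such a pair exists whenever some cross base pair is complementary and satisfies the minimum hairpin length. Since every term $e^{-\DG/RT}$ is strictly positive, adding this term gives a strict inequality, so $\delta>0$.
\end{itemize}

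\textbf{Main obstacle.} The delicate point is exact energy additivity under concatenation in the Turner model. Exterior-loop dangles or terminal mismatches at the boundary between $\vecx_a$ and $\vecx_b$ could perturb the energy of $\vecy_a\vecy_b$, either for the target or for other product structures.

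My first attempt is to adopt the no-dangle convention (or treat the junction as in the cited undesignability papers), so that exterior loop contributions of adjacent helices remain independent. If dangles are kept, I will instead compare the perturbed product terms to their unperturbed counterparts and argue the net effect on $\delta$ still has the right sign.

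For a general motif split (not a substring split), the analogous argument applies with $\vecm_b$ occupying the hole of $\vecm_a$. The product map must then insert $\vecy_b$ inside $\vecy_a$, and the loop at the cut must be attributed to exactly one side, with the same bookkeeping issue.

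Finally, if the existence of a cross pair cannot be guaranteed for arbitrary $\vecx_a,\vecx_b$, the fallback is to state $\delta\ge 0$ in general, with strict inequality for any sequences admitting at least one such cross pair. This covers all nontrivial cases of interest.
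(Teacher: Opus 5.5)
Your identity $\delta=\log Q-\log Q_a-\log Q_b$, with $Q_aQ_b$ realized as an energy-preserving sub-sum of $Q$, is the right one. The problem is that you attach it to the wrong decomposition, and there the step you call the main obstacle genuinely fails. The paper's proof splits $\vecm$ at a stacking pair $\langle (i,j),(i+1,j-1)\rangle$: $\vecm_b$ sits in the hole of $\vecm_a$, and ``$+$'' means filling that hole, not string concatenation. At an exterior-loop junction, which is your primary setting, energy additivity holds only under a no-dangle model. The paper uses the ViennaRNA engine with its default \texttt{-d2} dangles. Under that model, every structure whose last $\vecx_a$ base closes an exterior helix gains a $3'$ dangle on the first base of $\vecx_b$, and symmetrically on the $5'$ side of $\vecx_b$. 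Suppose the target has such a helix end while much of $\vecx_a$'s ensemble leaves that base unpaired. Then $p(\vecm_a+\vecm_b\mid\vecx_a+\vecx_b)$ can exceed $p(\vecm_a\mid\vecx_a)\,p(\vecm_b\mid\vecx_b)$, i.e.\ $\delta<0$. So your fallback, showing that the perturbation ``still has the right sign,'' cannot work in general. Only the \texttt{-d0} restriction rescues that case, and it is not the case the theorem is about.

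In the nested case, which you only gesture at, the argument closes cleanly and should be the proof.
\begin{itemize}
\item \emph{Energy additivity.} The stack's energy depends only on $\vecx_i,\vecx_j,\vecx_{i+1},\vecx_{j-1}$. Every other loop term, including mismatches and dangles, uses nucleotides from one side of the cut only: $\vecx_{i-1},\vecx_{j+1}$ outside, $\vecx_{i+2},\vecx_{j-2}$ inside. Hence gluing $\vecy_a$ and $\vecy_b$ along the cut is exactly energy-additive, with the stack counted once (as a constant factor if it belongs to neither motif).
\item \emph{Image of the map.} The image is precisely the set of structures containing both cut pairs. So $Q_aQ_b/Q$, including the stack factor if needed, equals $p(P_{i,j},P_{i+1,j-1}\mid\vecx_a+\vecx_b)$. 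This is exactly the paper's $\delta$, which the paper obtains from the chain rule. Its assertion that the conditionals ``reduce to'' the motif probabilities is precisely your bijection.
\item \emph{Strictness witness.} A pair between $\vecx_a$ and $\vecx_b$ positions is not the natural object here, and a lone hairpin pair would not respect the forced boundary pairs of $\vecm$. Instead, exhibit a structure with a cut pair open. For example, delete $(i+1,j-1)$: the stack and the adjacent inner loop become a $1\times1$ internal loop, a longer hairpin, or a multiloop with two extra unpaired bases.
\end{itemize}
Supplying that witness improves on the paper, which asserts $\delta>0$ without exhibiting one.
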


\begin{proof}
Assume that $\vecm$ is decomposed at the stacking pairs
$\langle (i,j), (i+1,j-1) \rangle$, as illustrated in Fig.~\ref{fig:motifab}.
Let $P_{i,j}$ denote the event that bases $i$ and $j$ are paired.
By the chain rule,
\begin{equation}
\hspace{-1cm}
\begin{aligned}
&{p} ( \vecm_a +  \vecm_b \mid \vecx_a + \vecx_b )\\
 =&  {p} (P_{i,j}, P_{i+1,j-1} \mid \vecx_a + \vecx_b) \\
&\times {p} ( \vecm_a | \vecx_a, P_{i,j}, P_{i+1,j-1})  \times {p} ( \vecm_b | \vecx_b, P_{i,j}, P_{i+1,j-1}) \\
=& {p} (P_{i,j}, P_{i+1,j-1} \mid \vecx_a + \vecx_b) \times {p} ( \vecm_a \mid \vecx_a)  \times {p} ( \vecm_b \mid \vecx_b), \label{eq:product}
\end{aligned}\hspace{-1cm}
\end{equation}
Since boundary base pairs are enforced during motif-level folding and structures are pseudoknot-free,
the conditional probabilities reduce to $p(\vecm_a \mid \vecx_a)$ and $p(\vecm_b \mid \vecx_b)$.
Taking $-\log(\cdot)$ yields
\begin{equation}
\begin{aligned}
\hspace{-0.2cm}&-\log {p} ( \vecm_a +  \vecm_b \mid \vecx_a + \vecx_b ) \\
\hspace{-0.2cm}=& \underbrace{- \!\log p(P_{i,j}, \!P_{i+1,j-1} \!\mid \vec{x}^a \!+ \vec{x}^b)}_{\delta}  \!- \!\log {p} (  \!\vecm_a  \!\mid  \!\vecx_a)  \!- \!\log {p} (  \!\vecm_b  \!\mid  \!\vecx_b). \label{eq:proof}
\end{aligned}
\vspace{-0.7cm}
\end{equation}
\hfill
\end{proof}

\textbf{Interpretation.} The quantity $\delta$ captures the likelihood that the boundary stacking pairs connecting $\vecm_a$ and $\vecm_b$ fail to form.
A smaller $\delta$ indicates a more reliable decomposition.


Motivated by the physical interpretation of $\delta$, we prefer to decompose structures at stacking pairs that are likely to form with high probability.
To identify such locations, we construct a library of \emph{easy-to-design} motifs.

Specifically, we systematically enumerate all short motifs $\vecm_{\text{short}}$ of length at most 14~\emph{nt} and design\footnote{The procedure for designing motifs is described in subsection~\ref{subsc:basecase}.} them by maximizing
$p(\vecm_{\text{short}} \mid \vecx)$.
This probability serves as a surrogate measure of the stability of stacking interactions embedded in the motif.
Motifs that achieve $p(\vecm_{\text{short}} \mid \vecx) \ge 0.95$ are classified as \emph{easy-to-design}.
This process yields a library of over 6{,}000 motifs, with representative examples shown in Fig.~\ref{fig:motifs2x2}.

Given a target structure $\ystar$, we locate occurrences of easy-to-design motifs and decompose $\ystar$ at the corresponding stacking pairs.
Fig.~\ref{fig:eterna86} illustrates such a decomposition at a helical region, which naturally belongs to the easy-to-design motif class.

The full decomposition procedure is described in Algorithm~\ref{alg:decomp}.

\begin{algorithm}[ht]
\caption{Structure Decomposition}
\label{alg:decomp}
\begin{algorithmic}[1]
\State \Comment{Input \vecm: structure or motif}
\Function{Decompose}{$\vecm$, $N_{\text{parent}}=\text{None}$}
    \If{$\vecm_{\text{parent}}$ is None}\Comment{The root node}
    	\State $\pnode \gets \text{RootNode}(5', 3')$ \Comment{the whole structure}
    \EndIf
    \For{$\vecm^e \in M_{\text{easy}}$}\Comment{set of easy-to-design motifs}
    	\If{$\vecm^e \subseteq \vecm$}\Comment{$\vecm^e$ appears in \vecm}
    	   \State $\vec{m}_1,\! \vec{m}_2\!\gets\!$ {split} $\vecm$ at stack $\langle (i\!-\!1,\! j\!+\!1),\! (i,\! j)\rangle$ of  $\vecm^e$
	   \State $N_\text{new} \gets \text{NewNode}(i, j)$\Comment{New subtarget at$(i, j)$}
	   \For{$\chnode \in \pnode.children$}
	   	\If{$\chnode.\mathit{left} > i \text{ and } \chnode.\mathit{right} < j$}
			 \State $\chnode.parent \gets \newnode$
		\EndIf
	   \EndFor
	   \State \Call{Decompose}{$\vecm_2$, $N_\text{new}$}\Comment{Recursion}
	   \State \Call{Decompose}{$\vecm_1$, $\pnode$}\Comment{Recursion}
	\EndIf
	\State \Return \pnode
    \EndFor
    \State $N_\mathit{leaf} \gets \mathit{NewLeaf}(\vecm)$; $(N_\mathit{leaf}).\mathrm{parent} \gets \pnode$
    \State \Return \pnode
\EndFunction
\end{algorithmic}
\end{algorithm}

\subsection{Conquer: Base Cases and Recursion}\label{sec:conquer}
%
%
After decomposition, a target structure is represented as a tree of subtargets, where each leaf node corresponds to a motif and each internal node corresponds to a larger substructure, as illustrated in Fig.~\ref{fig:dccf}.
The design process proceeds bottom-up on this tree.

\subsubsection{Leaf Nodes (Motifs)}\label{subsc:basecase}
For each leaf node, corresponding to a motif $\mstar$, we design sequences from scratch.
To this end, we adapt our previous whole-structure RNA design algorithm, SAMFEO~\citep{zhou+:2023samfeo}, to operate at the motif level.
Given a target motif $\mstar$, the adapted SAMFEO algorithm searches for a sequence $\vecx$ that maximizes the local folding probability $p(\mstar \mid \vecx)$.
This provides a set of high-quality local designs that serve as base cases for the recursive combination process.
Details of the adaptation are provided in Supplementary Section~\ref{sec:samfeo}.

\subsubsection{Internal Nodes (Substructures)}
For each internal node, corresponding to a substructure composed of multiple child motifs or substructures, we construct candidate designs by combining the designs generated for its children.
Since the number of possible combinations grows combinatorially with the number of children, a naive exhaustive strategy is computationally infeasible.
Therefore, an efficient and principled combination strategy is required.

To address this challenge, we propose a \emph{cube pruning} approach, inspired by best-first decoding algorithms in machine translation~\cite{huang+chiang:2007}.
This approach enables efficient exploration of the combinatorial design space while prioritizing candidates that are likely to yield high global folding probabilities.
The method is described in detail in the following subsection.

\subsection{Combine: Cube Pruning}\label{sec:combine}
Without loss of generality, we first consider the case of combining two motifs, $\vecm_a$ and $\vecm_b$, to form their parent substructure
$\vecm_c = \vecm_a + \vecm_b$.
The generalization to more than two children follows naturally.
\begin{enumerate}
\item $\vecm_a$ has $k$ designs $X_a = [ {\vecx_a^1}, {\vecx_a^2}, \ldots, {\vecx_a^k} ]$, each with a known probability value $p(\vecm_a \mid \vecx_a^i)$, and the designs in $X_a$ are sorted by $p(\vecm_a \mid \vecx_a^i)$.
\item $\vecm_b$ has $k$ designs $X_b = \{ {\vecx_b^1}, {\vecx_b^2}, \ldots, {\vecx_b^k} \}$, each with a known probability value $p(\vecm_b \mid \vecx_b^j)$, and the designs in $X_b$ are sorted by $p(\vecm_b \mid \vecx_b^j)$.
\item $\vecm_c$ has $k^2$ candidates: the set of all combinations $X_a \times X_b = \{ \vecx_a^i + \vecx_b^j \mid \vecx_a^i \in X_a, \vecx_b^j \in X_b \}$.
\end{enumerate}
The objective is to select the top $k$ candidates from $X_a \times X_b$ with respect to the true folding probability.
A brute-force evaluation of all $k^2$ candidates (or $k^d$ for $d$ children) is infeasible for two reasons:
(i) the number of combinations grows exponentially with the number of submotifs;
(ii) evaluating each candidate requires RNA folding, which is computationally expensive.
Thus, the core challenge is to \emph{identify near-optimal combinations while minimizing the number of folding evaluations}.


\begin{figure}[htbp]
        \begin{subfigure}[b]{0.34\textwidth}
        \includegraphics[width=\textwidth, angle=0]{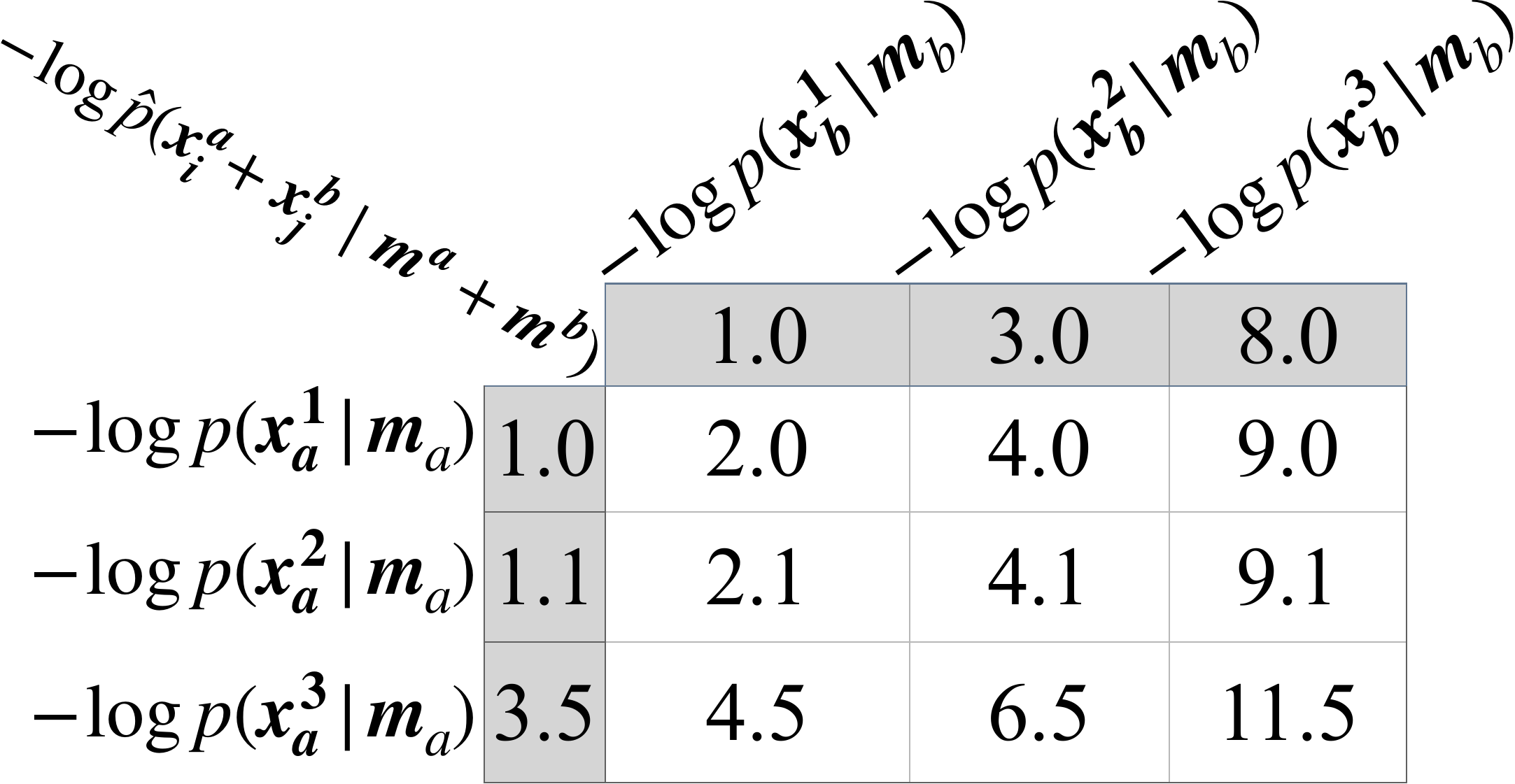}
        \caption{}
        \label{fig:cubic_approx}
    \end{subfigure}
    \hfill
    \begin{subfigure}[b]{0.15\textwidth}
        \includegraphics[width=\textwidth]{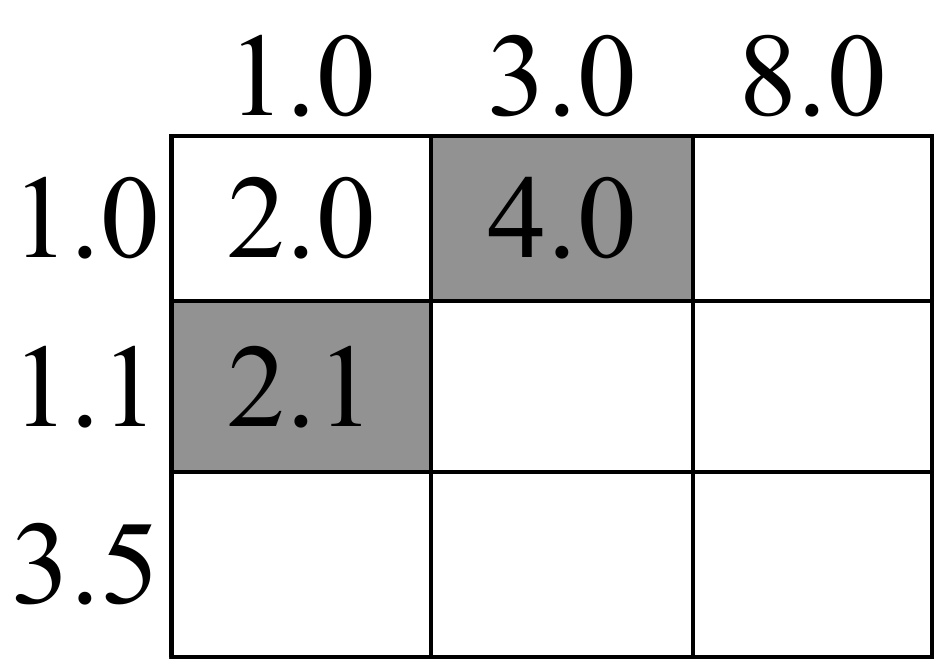}
        \caption{}
        \label{fig:prune_a1}
    \end{subfigure}
    \hfill
    \begin{subfigure}[b]{0.15\textwidth}
        \includegraphics[width=\textwidth]{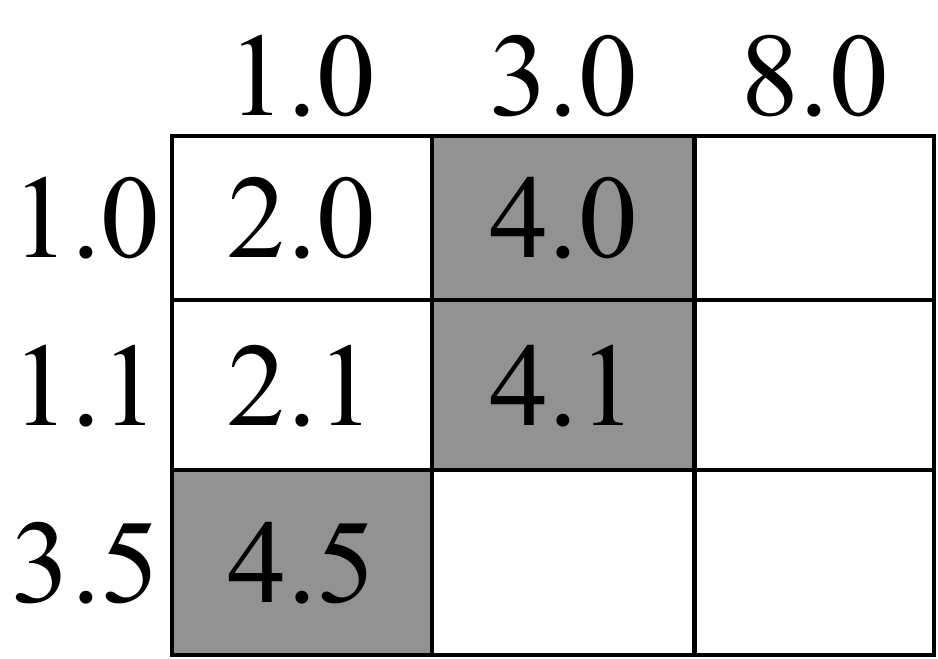}
        \caption{}
        \label{fig:prune_a2}
  \end{subfigure}
   \hfill
    \begin{subfigure}[b]{0.15\textwidth}
        \includegraphics[width=\textwidth]{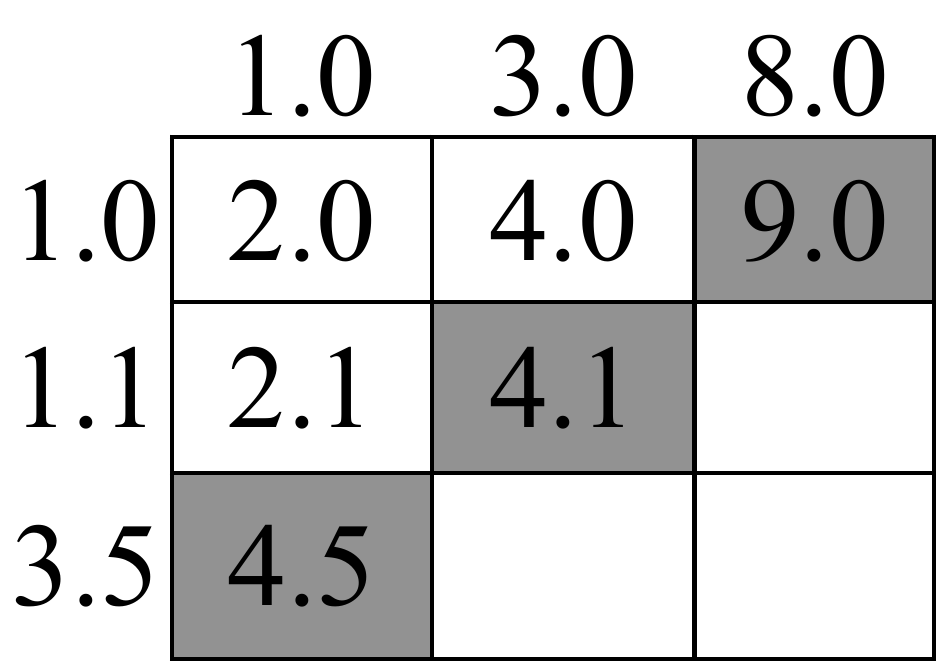}
        \caption{}
        \label{fig:prune_a3}
  \end{subfigure}
  \caption{Cube pruning over \textbf{approximated values}. (a): the numbers in the grid denote the approximated scores by linear addition; (b)-(d): the best-first enumeration of the top three items. Notice
that the items popped in (b) and (c) are out of order due to the non-monotonicity of the combination cost.}
\vspace{-0.1cm}
\end{figure}
\begin{figure}[htbp]
        \begin{subfigure}[b]{0.34\textwidth}
        \includegraphics[width=\textwidth, angle=0]{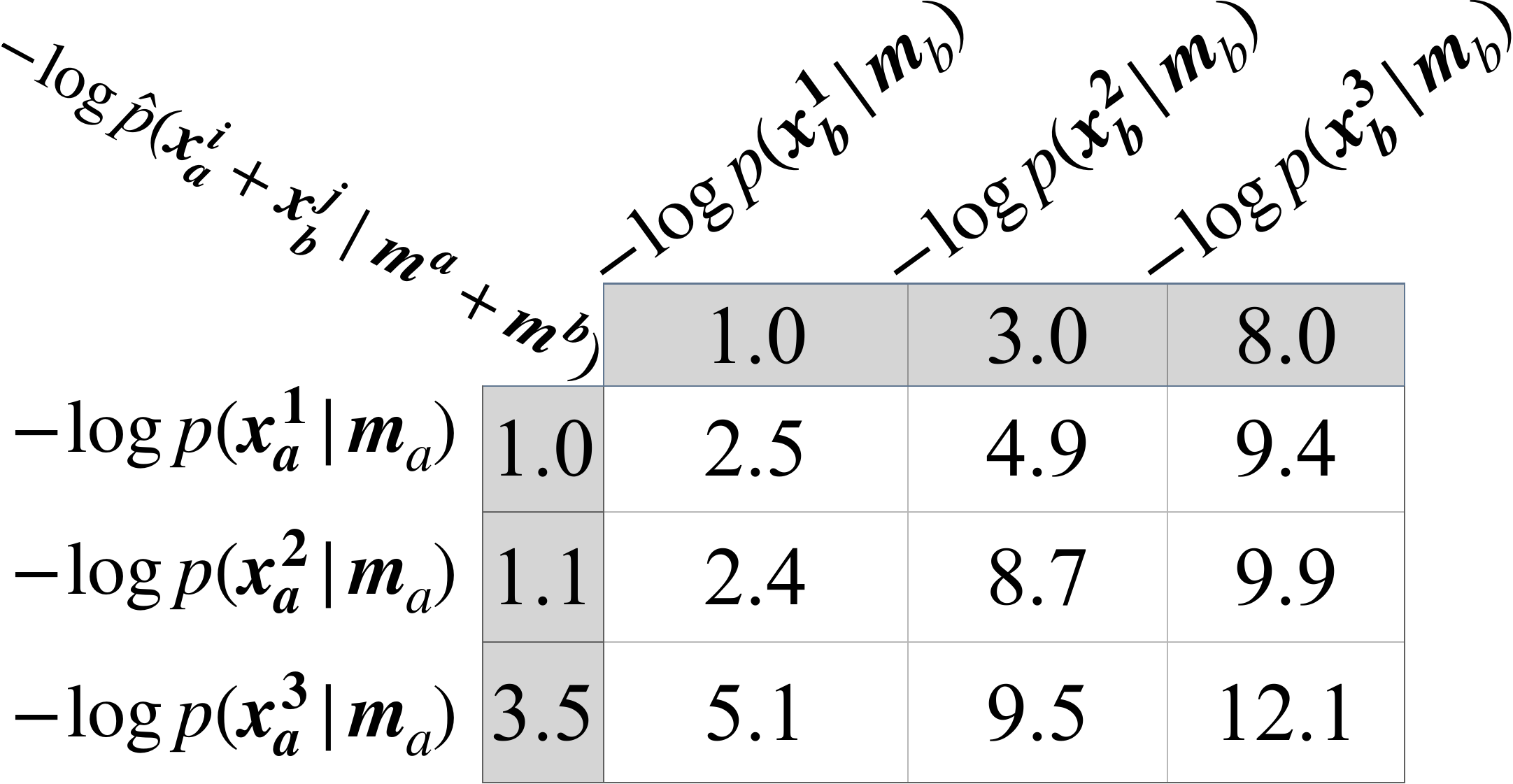}
        \caption{}
        \label{fig:cubic_real}
    \end{subfigure}
    \hfill
    \begin{subfigure}[b]{0.15\textwidth}
        \includegraphics[width=\textwidth]{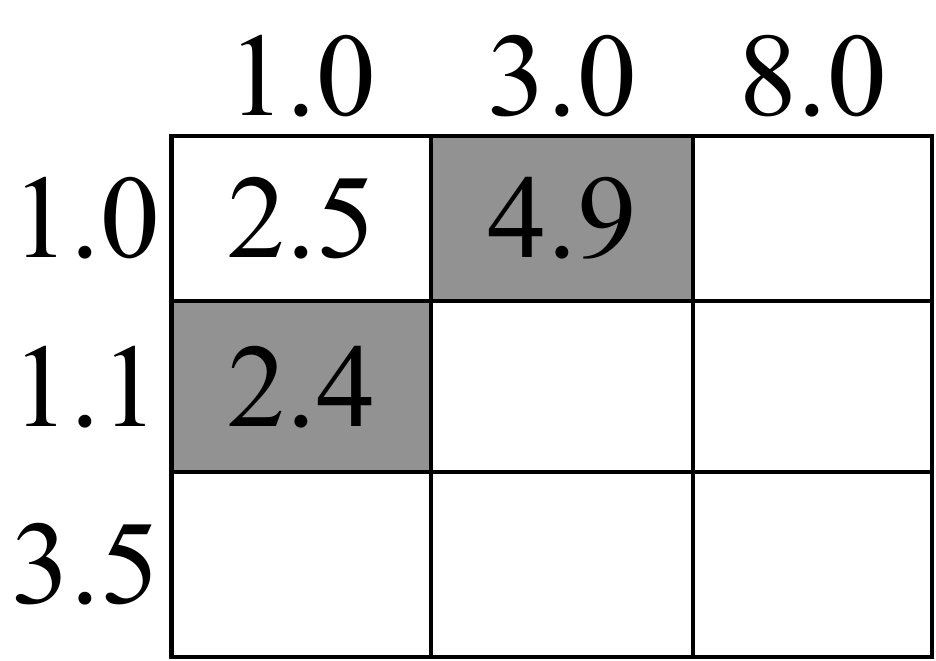}
        \caption{}
        \label{fig:prune_r1}
    \end{subfigure}
    \hfill
    \begin{subfigure}[b]{0.15\textwidth}
        \includegraphics[width=\textwidth]{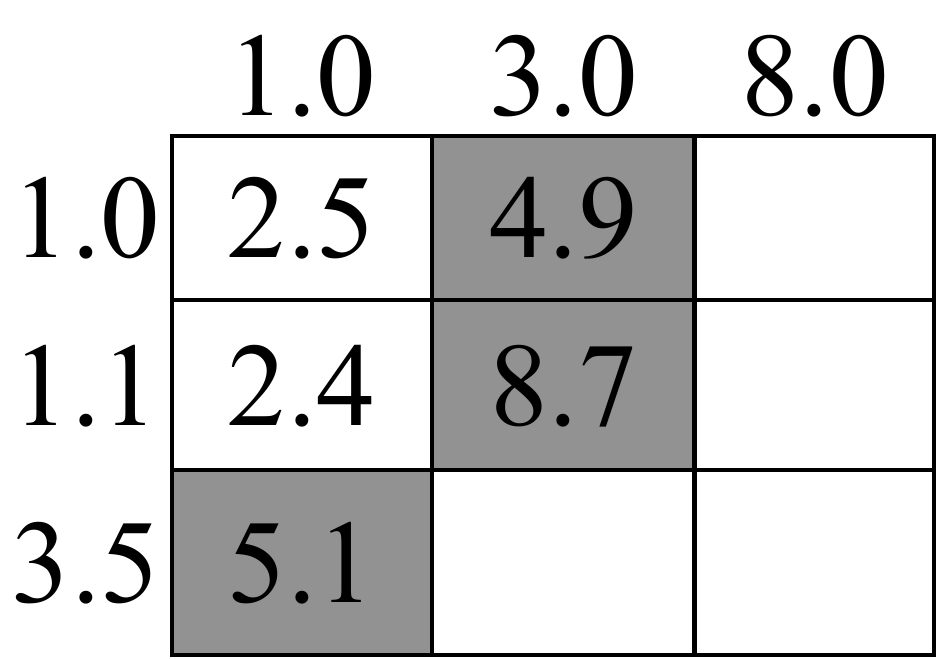}
        \caption{}
        \label{fig:prune_r2}
  \end{subfigure}
   \hfill
    \begin{subfigure}[b]{0.15\textwidth}
        \includegraphics[width=\textwidth]{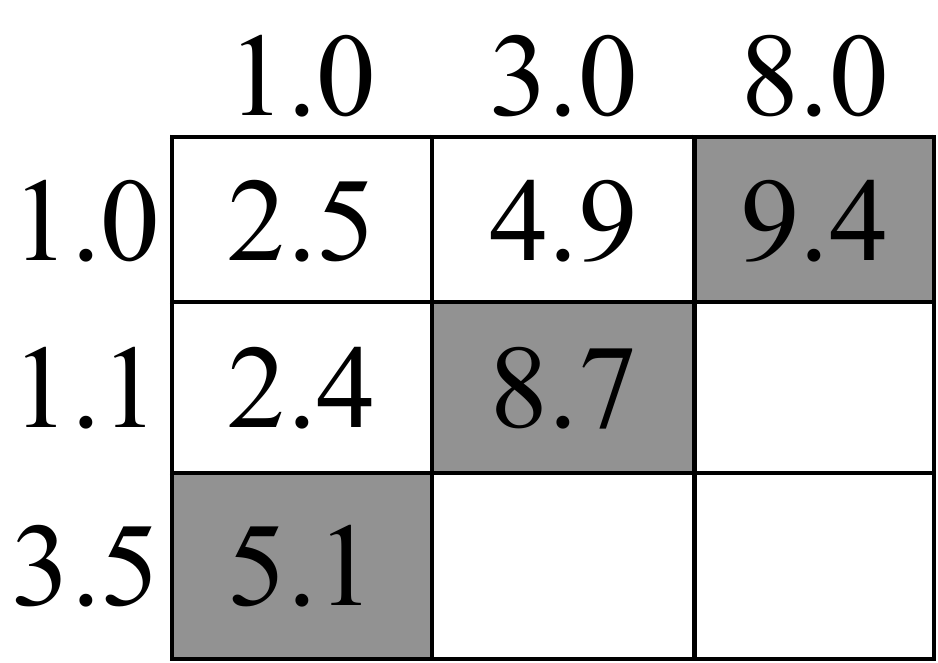}
        \caption{}
        \label{fig:prune_r3}
  \end{subfigure}
  \caption{ Cube pruning over \textbf{real values}. (a): the numbers in the grid denote the real scores by evaluation at the cost of RNA folding; (b)-(d): the best-first enumeration of the top three items. Notice
that the items popped in (b) and (c) are out of order due to the non-monotonicity of the combination cost.}
\end{figure}

If the objective were perfectly additive,  i.e., $-\log p(\vecm_a +  \vecm_b \mid \vecx_a^i + \vecx_b^j)=-\log p(\vecm_a \mid \vecx_a^i) -\log p(\vecm_b \mid \vecx_b^j)$, and the $k^2$ candidates would form a monotonically ordered two-dimensional grid
(Fig.~\ref{fig:cubic_approx}). In this idealized setting, the best candidate appears at the top-left corner, and the next-best candidates must be adjacent to it.

This property allows an efficient best-first enumeration using a priority queue: start from the top-left cell, repeatedly pop the current best item, and push its unvisited neighbors into the queue.
As shown in Figs.~\ref{fig:prune_a1}–\ref{fig:prune_a3}, this process maintains a frontier of the most promising candidates, ensuring the $k$ best items can be obtained after $k$ pops.

However, $\log p(\vecm \mid \vecx)$ does not satisfy the linearity as stated in Theorem~\ref{theo:delta}.
Computing the exact value of $\log p(\vecm_a + \vecm_b \mid \vecx_a^i + \vecx_b^j)$ requires RNA folding, which is computationally expensive.
To mitigate this cost, we approximate it by linear addition:
\vspace{-0.2cm}
\begin{equation}
\begin{aligned}
 &-\log \hat{p}(\vecm_a +  \vecm_b \mid \vecx_a^i + \vecx_b^j)\\
 \defeq &-\log p(\vecm_a \mid \vecx_a^i) - \log p(\vecm_b \mid \vecx_b^j). \label{eq:linear}
\end{aligned}
\end{equation}
This approximation provides an efficient heuristic that guides the search toward promising regions of the combinatorial space.

When we switch from approximated to real values (Fig.~\ref{fig:cubic_real}), the monotonic property of the grid no longer holds, as shown in Figs.~\ref{fig:prune_r1}–\ref{fig:prune_r3}.
For example, in Fig.~\ref{fig:prune_r1}, an item with score 2.5 is selected before one with a better (lower) score of 2.4, reflecting non-monotonicity of the true combination scores.
While this introduces occasional ranking errors, the overall accuracy remains satisfactory compared to the full combinatorial search, with substantially improved efficiency.

To compensate for this imperfection, we evaluate the \emph{true} folding probability of each candidate when it is popped from the priority queue.
This strategy incurs at most $2k$ folding evaluations in the two-motif case, which is negligible compared to the $k^2$ evaluations required by exhaustive search.
As a result, cube pruning achieves a substantial speedup while maintaining competitive solution quality.

Algorithm~\ref{alg:cubic_pruning} presents the cube pruning procedure for combining two motifs.
A generalized version for combining multiple motifs is provided in Algorithm~\ref{alg:cubic_pruning_multi}.

\begin{algorithm}[htbp]
\caption{\textsc{CubePruning}$(X_a, X_b, \vecm_a, \vecm_b, k)$ \Comment{simplified version of combining two motifs}}
\label{alg:cubic_pruning}
\begin{algorithmic}[1]
\Function{CubePruning}{$X_a, X_b, \vecm_a, \vecm_b, k$} 
    \State $\vecm_c \gets \vecm_a + \vecm_b$\Comment{combine motifs}
    \State Initialize a priority queue $\mathcal{Q}$
    \State Initialize empty set $X_c \gets \varnothing$
    \State Eval. true score $s = -\log p(\vecm_c \mid \vecx_a^1 + \vecx_b^1)$ by folding
    \State Push top-left combo $(1,1)$ into $\mathcal{Q}$ with score $s$
    
    \While{$|X_c| < k$ and $\mathcal{Q} \neq \varnothing$}
        \State $(i,j,s) \gets$ \Call{PopBest}{$\mathcal{Q}$} \Comment{lowest true score combo}
        \State Append $\vecx_a^i + \vecx_b^j$ to $X_c$ with score $s$
        \For{$(i', j') \!\in\! \{(i\!+\!1,j), (i,j\!+\!1)\}$ within bounds}
            \If{$(i', j')$ not yet visited}
                \State Eval. true score $s' = -\log p(\vecm_c \mid \vecx_a^{i'} + \vecx_b^{j'})$ 
                \State Insert $(i', j')$ into $\mathcal{Q}$ with score $s'$
            \EndIf
        \EndFor
    \EndWhile
    \State \Return $X_c$
\EndFunction
\end{algorithmic}
\end{algorithm}

\section{Structure-level Rival Search}\label{sec:rival-search}

The optimization objective of the above motif-level divide-conquer-combine stage is to maximize the folding probability $p(\ystar | \vecx)$. 
Although an RNA sequence $\vecx$ with a high $p(\ystar \mid \vecx)$ often satisfies the \MFE or \UMFE criterion, there are cases where a high probability does not guarantee either.
Consider two sequences $\vecx_1$ and $\vecx_2$ where $p(\ystar \mid \vecx_1) > p(\ystar \mid \vecx_2)$. 
It is still possible that $\vecx_2$ is the \UMFE solution for the target structure $\ystar$, whereas $\vecx_1$ is not. 
This occurs when another structure $\vecy'$ in the ensemble $\mathcal{Y}(\vecx_1)$ has slightly lower free energy (or slightly higher equilibrium probability) than $\ystar$.
In such cases, the \MFE structure of $\vecx_1$ is usually close to $\ystar$, and only minor modifications to $\vecx_1$ are needed to convert it into an \MFE or \UMFE solution. 
Based on this observation, we propose an efficient rival structure search algorithm to bridge the gap between ensemble-based and \MFE-based designs.


Consider the example in Fig.~\ref{fig:ex1}, where $\vecx$ is obtained by optimizing $p(\ystar \mid \vecx)$, and $\vecy'$ is its unique \MFE structure. 
Following our previous work~\citep{zhou+:2024undesignable}, we call $\vecy'$ a \highlight{rival structure} with respect to $\ystar$, since $\DG(\vecx, \ystar) > \DG(\vecx, \vecy')$. 
To modify $\vecx$ into $\vecx'$ that satisfies the \MFE or \UMFE criterion, a necessary (but not sufficient) condition is $\DDG(\vecx, \ystar, \vecy') = \DG(\vecx', \ystar) - \DG(\vecx', \vecy') < 0$.

%

According to prior analysis of undesignability~\citep{zhou+:2024undesignable}, the free energy difference ~$\DDG(\vecx, \ystar, \vecy')$
depends only on a few so-called \highlight{differential positions} in $\vecx$ (nine in this case), as annotated with `$*$' in Fig.~\ref{fig:ex1}. 
Since computing $\DDG$ is equivalent to a simple energy table lookup, we can efficiently (about 1 sec.) enumerate all possible nucleotide combinations on these positions (36,864 in total) to identify those satisfying $\DDG(\vecx, \ystar, \vecy') < 0$. 
Among them, only nine modifications (listed in Table~\ref{tab:dc1}) lead to $\DDG(\vecx, \ystar, \vecy') < 0$, and by testing these nine variants, we successfully obtained a \UMFE solution $\vecx'$ for $\ystar$.

\begin{figure}
\hspace{-0.3cm}
\setlength{\tabcolsep}{1pt}
  \begin{tabular}{rl}
\scriptsize $\ystar$& \tiny \verb|((.((..((..(...(.((.((....))))..)...(.((..((....))..))..)...).))..))))|\\
\scriptsize $\vecy'$&	\tiny \verb|((.((..((..(...(.((.((....))))..).....((..((....))..))......).))..))))|\\
\scriptsize $\vecx$\, & \tiny		\verb|GCACGGACUGACAAAGAGGACCGAAAGGCUGACAAACGGCGACCAAAAGGGAGCAAGGAAGAGGGACGGC| \\[-0.2cm]
          & 	\tiny	\verb|                                   ****              *****  |  \\[-0.2cm]
\scriptsize $\vecx'$ &	\tiny \verb|GCACGGACUGACAAAGAGGACCGAAAGGCUGACAAAGAGCGACCAAAAGGGAGUGACAAAGAGGGACGGC|
 \end{tabular}
  \caption{The \MFE structure of \vecx is $\vecy'$. \UMFE solution $\vecx'$ is obtained by modifying $\vecx$ on differential positions.}
  \label{fig:ex1}
\end{figure}

\begin{figure*}
  \begin{minipage}{.37\textwidth}
    \centering
    \captionof{table}{Design constraint induced  by $\vecy'$ in Fig.~\ref{fig:ex1}. $I$: indices or positions; $\hat{\vecx}^1-\hat{\vecx}^9$: nucleotides on $I$.}\label{tab:dc1}
    \begin{tabular}{c|ccccccccccc}
	$I$ & 36 & 37 & 38  & 39  & 54 & 55 & 56 & 57 & 58\\ \hline
	$\hat{\vecx}^1$&A & G & A & G & U & G & A & C & A\\ \hline
	$\hat{\vecx}^2$&C & G & A & G & U & G & A & C & A \\ \hline
	$\hat{\vecx}^3$&G & G & A & G & U & G & A & C & A \\ \hline
	$\hat{\vecx}^4$&U & G & A & G & U & G & A & C & A \\ \hline
	$\hat{\vecx}^5$&A & G & A & G & U & G & A & C & G \\ \hline
	$\hat{\vecx}^6$&C & G & A & G & U & G & A & C & G \\ \hline
	$\hat{\vecx}^7$&G & G & A & G & U & G & A & C & G \\ \hline
	$\hat{\vecx}^8$&U & G & A & G & U & G & A & C & G \\ \hline
	$\hat{\vecx}^9$&C & G & A & G & U & G & A & C & U  \\ \hline
	\end{tabular}
  \end{minipage}%
\hfill
  \begin{minipage}{.24\textwidth}
    \centering
    \vspace{-0.25cm}
    \includegraphics[width=\linewidth]{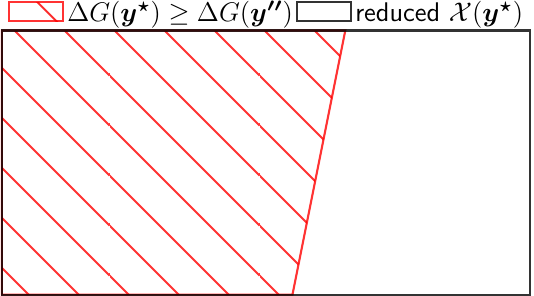}
    \caption{\mbox{$\vecy''$ reduces design space.}}\label{fig:ds1}
    \includegraphics[width=\linewidth]{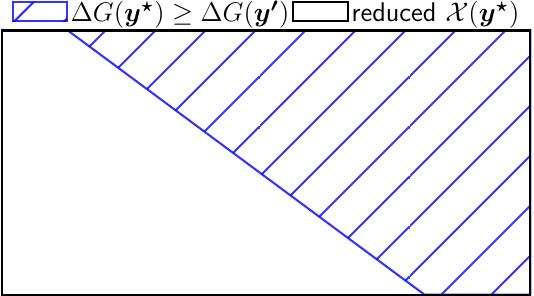}
    \caption{\mbox{$\vecy'$ reduces design space.}}\label{fig:ds2}
  \end{minipage}
  \hspace{0.1cm}
  \begin{minipage}{.25\textwidth}
    \centering
    \includegraphics[width=\linewidth]{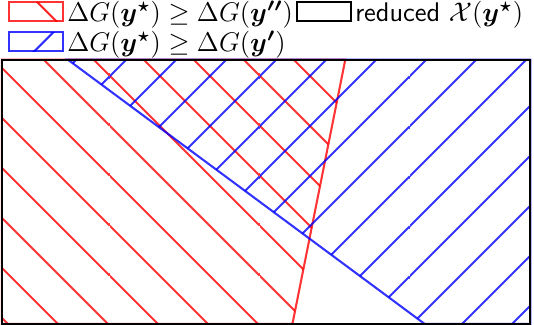}
    \caption{{Intersection of two reduced design spaces in \cref{fig:ds1,fig:ds2}.}}\label{fig:dsinter}
  \end{minipage}
\end{figure*}
The above search process, guided by a single rival structure $\vecy'$, does not always guarantee finding a \UMFE\ solution. 
We therefore generalize this idea by allowing multiple rival structures to jointly guide the search.
The key observation is that each rival structure partitions the RNA design space into two regions:

\begin{proposition}\label{prop:split}
 Given a target structure $\ystar$ and another structure $\vecy' \neq \ystar$, the RNA design space $\mathcal{X}(\ystar)$ can be divided into the two sets below.
 \begin{enumerate}
 \item \resizebox{\linewidth}{!} {$\mathcal{X}(\ystar<\vecy') = \{ \vecx \mid  \DG(\vecx, \ystar) <  \DG(\vecx, \vecy'), \vecx \in \mathcal{X}(\ystar) \} $};
 \item \resizebox{\linewidth}{!} {$\mathcal{X}(\ystar\geq\vecy') = \{ \vecx \mid  \DG(\vecx, \vecy') \leq \DG(\vecx, \ystar), \vecx \in \mathcal{X}(\ystar) \} $.}
 \end{enumerate}
 \end{proposition}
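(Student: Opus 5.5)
The claim is that the two sets form a partition of $\mathcal{X}(\ystar)$: they are disjoint and their union is $\mathcal{X}(\ystar)$. I would argue pointwise, fixing an arbitrary $\vecx \in \mathcal{X}(\ystar)$ and using only the order on the reals.

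First I check that every quantity in the definitions is well defined. Since $\vecx \in \mathcal{X}(\ystar)$, we have $\ystar \in \mathcal{Y}(\vecx)$, so $\DG(\vecx, \ystar)$ is a real number given by the loop-additive energy model. The one subtle point is $\DG(\vecx, \vecy')$, because $\vecy'$ need not lie in $\mathcal{Y}(\vecx)$: some pair of $\vecy'$ might not be a valid base pair in $\vecx$. I would handle this by the standard convention that a structure outside the ensemble has energy $+\infty$, that is, Boltzmann weight zero. Then $\DG(\vecx, \vecy')$ is an element of $\mathbb{R}\cup\{+\infty\}$, which is still totally ordered. Such an $\vecx$ satisfies $\DG(\vecx, \ystar) < \DG(\vecx, \vecy')$ and therefore lands in the first set, which matches the intended meaning that $\vecy'$ is no rival for $\vecx$.

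With this in place, the proof is trichotomy. For the two values $a = \DG(\vecx, \ystar)$ and $b = \DG(\vecx, \vecy')$, exactly one of $a < b$ and $b \le a$ holds.
\begin{itemize}
\item \emph{Covering:} at least one of the two holds, so $\vecx$ lies in $\mathcal{X}(\ystar<\vecy')$ or in $\mathcal{X}(\ystar\geq\vecy')$. Both sets are by definition subsets of $\mathcal{X}(\ystar)$, so their union equals $\mathcal{X}(\ystar)$.
\item \emph{Disjointness:} at most one holds, since $a<b$ and $b\le a$ together give $a<a$, which is impossible. Hence the intersection of the two sets is empty.
\end{itemize}

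The only real obstacle is the well-definedness issue for $\vecy' \notin \mathcal{Y}(\vecx)$. I would state the $+\infty$ convention explicitly, or alternatively restrict the comparison to those $\vecx$ for which both structures are in the ensemble. I would also note that ties $a=b$ are deliberately placed in the second set. This is consistent with the \UMFE criterion (\ref{eq:umfe}): a \UMFE design must lie in $\mathcal{X}(\ystar<\vecy')$ for every $\vecy' \neq \ystar$, so each rival legitimately prunes $\mathcal{X}(\ystar\geq\vecy')$ from the search.
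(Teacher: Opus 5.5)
Your proof is correct and matches the paper. The paper states the proposition without proof, relying on the fact that $\DG(\vecx,\ystar)<\DG(\vecx,\vecy')$ and $\DG(\vecx,\vecy')\le\DG(\vecx,\ystar)$ are complementary, so every $\vecx\in\mathcal{X}(\ystar)$ falls in exactly one set; your trichotomy argument makes this explicit. Your $+\infty$ convention for $\vecy'\notin\mathcal{Y}(\vecx)$ is a useful clarification the paper leaves unstated, and it correctly places such $\vecx$ in $\mathcal{X}(\ystar<\vecy')$, consistent with the \UMFE criterion quantifying only over $\mathcal{Y}(\vecx)$.
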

 
Since $\DDG (\vecx, \ystar, \vecy')$ is a necessary condition for \UMFE\ solutions, each rival structure $\vecy'$ effectively reduces the feasible design space $\mathcal{X}(\ystar)$, as illustrated in Figs.~\ref{fig:ds1} and~\ref{fig:ds2}. 
The reduced space can be represented as a design constraint, as shown in Table~\ref{tab:dc1}. 
When multiple rival structures (e.g., $\vecy'$ and $\vecy''$) are considered, their reduced spaces can be intersected (Fig.~\ref{fig:dsinter}) to further narrow the search region. 
Design space intersection can be achieved by combining their corresponding design constraints, as detailed in Supplementary Section~\ref{sec:intersection}.
In practice, we progressively identify multiple rival structures so that the design space is continuously reduced, efficiently guiding the search toward \MFE/\UMFE\ solutions, as detailed in Supplementary Algorithm~\ref{alg:rivalsearch}.

\vspace{-0.5cm}
\section{Experiments}

\subsection{Evaluation setting}
\subsubsection{Datasets}
Our experiments are conducted on a diverse set of RNA design benchmarks, including native structures, human-crafted structures, and structures derived from native RNA sequences.
The basic information on these datasets are listed below and structure selection details are in Supplementary Section~\ref{sec:datasets}.
\begin{enumerate}
\item \textbf{RNAsolo764.} RNAsolo~\citep{badura+:2025} is a comprehensive collection of RNA secondary structures curated from experimentally determined 3D structures~\citep{adamczyk+:2022rnasolo}. After preprocessing, 764 native RNA structures remained.
\item \textbf{ArchiveII100.} We select 100 native structures from the ArchiveII benchmark.~\citep{sloma+mathews:2016,Cannone+:2002}.
\item \textbf{Eterna100.} The Eterna100 benchmark~\citep{anderson+:2016} consists of 100 secondary structures designed by human players of the online game Eterna. 
\item \textbf{16S-MFE.} The 16S-MFE ~\citep{zhou+:2023samfeo} was introduced to assess the ability of RNA design methods to handle long structures.
It contains 10 \MFE structures of 16S rRNA in the ArchiveII benchmark.
\end{enumerate}
\vspace{0.1cm}
\textbf{Metrics} 
Following the settings in the literature~\citep{rubio2018multiobjective, anderson+:2016, zhou+:2023samfeo}, we use RNAfold from ViennaRNA 2.7~\cite{lorenz+:2011} as a folding engine, and each method is run 5 times to design a structure. We report the union and best metrics accordingly. 
Except for extra explanation, the following metrics are used for evaluation:
\begin{enumerate}
\item Number of solved puzzles by the \MFE or \UMFE criterion over 5 runs.
\item Best equilibrium probability~$p(\ystar \mid \vecx)$ or ensemble defect~$\NED(\vecx, \ystar)$   over 5 runs.
\item Running time per structure averaged over 5 runs.
\end{enumerate}
\vspace{0.1cm}
\textbf{Baselines}
We use following methods as baselines. 
\begin{enumerate}
\item RNAinverse,  default RNA design algorithm optimizing structure distance in ViennaRNA~\cite{lorenz+:2011}.
\item RNAinverse-pf, which is a variant of RNAinverse optimizing partition function. 
\item NUPACK~\citep{Zadeh+:2010}, which utilizes decomposition to optimize ensemble defect.
\item NEMO~\citep{portela:2018unexpectedly}, which combines Nested Monte Carlo Search with domain-specific knowledge for RNA design. It relies on human-crafted rules to achieve strong \MFE-based metrics. We also have a version of NEMO without hard rules.
\item m2dRNAs~\citep{rubio2018multiobjective}, which applies genetic algorithm to optimize multiple objectives..
\item SAMFEO~\citep{zhou+:2023samfeo} optimizes ensemble objectives yielding \MFE solutions as byproducts.
\end{enumerate}
These baselines are widely adopted in the RNA design literature~\citep{rubio2018multiobjective, anderson+:2016, portela:2018unexpectedly, garcia+:2013rnaifold}.
We do not include recent machine learning–based methods, which generally remain less competitive in this setting.
\vspace{-0.2cm}
\subsubsection{Method Variants}
We name our approach FastDesign, which has two variants: \ours and X\ours.
The key difference lies in the divide–conquer–combine stage: \ours performs additional refinement steps on the complete sequences generated at the root node, improving the ensemble-based objective. As shown in the following evaluations, \ours achieves state-of-the-art performance with strong efficiency, while X\ours provides an extremely fast RNA design solution that still delivers competitive design quality compared to baselines. Our methods are implemented in C++ and Python, hardware and hyperparameters are in Supplementary Section~\ref{sec:expsets}.

\subsection{Native Structure Design}
\begin{figure*}[ht]
\begin{minipage}{0.53\textwidth}
\begin{threeparttable}
\captionof{table}{Comparison of RNA design methods on RNAsolo764.\tnote{a}}
\label{tab:rnasolo}

\setlength{\tabcolsep}{2.5pt}
\begin{tabular}{l|rr|rr|r}
\hline
 \multirow{2}{*}{Method}      
 & \multicolumn{2}{c|}{Best}  
 & \multicolumn{2}{c|}{\#Solved {\scriptsize (union)}}  
 & \multicolumn{1}{c}{avg.~time} \\
 & $p(\ystar | \vecx)^\uparrow$
 & $\text{NED}_\downarrow$
 & MFE$^\uparrow$\!\!
 & uMFE$^\uparrow$\!\! 
 & (secs)$_\downarrow$ \\
\hline
RNAinverse        & 0.193 & 0.185 & 449 & 426 & \textbf{7.1} \\ 
RNAinverse-pf     & 0.717 & 0.016 & 599 & 593 & 34.1 \\ 
NUPACK            & 0.535 & 0.027 & 509 & 493 & 20.7 \\ 
NEMO              & 0.527 & 0.035 & \underline{609} & \underline{605} & 440.5 \\ 
m2dRNAs           & 0.588 & 0.037 & 601 & 600 & 226.7 \\ 
SAMFEO            & \underline{0.729} & \textbf{0.013} & 608 & 602 & 208.2 \\ 
\hline
X\ours         & 0.717 & 0.017 & 607 & 603 & \underline{13.3} \\ 
\ours         & \textbf{0.732} & \textbf{0.013} & \textbf{611} & \textbf{606} & 37.6 \\
\hline
\end{tabular}

\begin{tablenotes}
\item[a] \textbf{Bold}: the best; \underline{underline}: the second best.
\end{tablenotes}
\end{threeparttable}
\end{minipage}
\hspace{0.2cm}
\begin{minipage}{0.4\textwidth}
\vspace{-1.0cm}
\captionof{table}{Ablation Study of Rival Search on RNAsolo764}\label{tab:ab_rnasolo}
\begin{tabular}{l|c|rr|r}
\hline
Method &Ablation & MFE$^\uparrow$\!\! & \!\!uMFE$^\uparrow$\!\! & Time$_\downarrow$ \\
\hline
X\ours & None & 607 & 603 & 13.3 \\
X\ours & RivalSearch & 604 & 601 & 12.6 \\
\hline
\ours &  None & 611 & 606 & 37.6 \\
\ours & RivalSearch & 609 & 602 & 37.1 \\
\hline
\end{tabular}
\end{minipage}
\begin{minipage}{0.58\textwidth}
\begin{threeparttable}
\vspace{0.1cm}
\captionof{table}{Comparison of RNA design methods on Eterna100.}\label{tab:eterna}
\setlength{\tabcolsep}{3.5pt}
\begin{tabular}{l|rr|rr|r}
\hline
 \multirow{2}{*}{Method}      
 & \multicolumn{2}{c|}{Best}  
 & \multicolumn{2}{c|}{\#Solved {\scriptsize (union)}}  
 & \multicolumn{1}{c}{avg.~time} \\
 & $p(\ystar | \vecx)^\uparrow$
 & $\text{NED}_\downarrow$
 & MFE$^\uparrow$\!\! 
 & uMFE$^\uparrow$\!\! 
 & (secs)$_\downarrow$ \\
\hline
RNAinverse  & 0.064 & 0.322 & 32 & 29 &  \underline{207.7}\\
RNAinverse-pf  & 0.571 & 0.045 & 77 & 72 & 407.1 \\
NUPACK & 0.242 & 0.081 & 34 & 34 &  412.0 \\
NEMO & 0.271 & 0.098 & \underline{79} & \underline{77} & 937.5 \\
NEMO w/o rules & 0.162 & 0.147 & 76 & 75 & 917.9 \\
m2dRNAs & 0.377 & 0.100 & 72 & 70 & 716.1 \\
SAMFEO-origin\tnote{$^\ast$} & \underline{0.581} & 0.052 & 77 & 74 & 279.1 \\
SAMFEO-imprvd\tnote{$^\dagger$} & {0.580} & \underline{0.040} & \underline{79} & 75 & 673.8 \\
\hline
X\ours          & 0.537 & 0.060 &  76 &  74 & \textbf{127.5} \\
\ours          & \textbf{0.589} & \textbf{0.038} &  \textbf{80} &  \textbf{78} & 284.1 \\
\hline
\end{tabular}
\begin{tablenotes}
\item[$\ast$] SAMFEO-original, original  implementation~\cite{zhou+:2023samfeo}. 
\item[$\dagger$] \text{SAMFEO-improved, implementation with improved numerical accuracy.}
\end{tablenotes}
\end{threeparttable}
\end{minipage}
\hspace{0.1cm}
\begin{minipage}{0.4\textwidth}
\vspace{-2.0cm}
\captionof{table}{Ablation Study of Rival Search on Eterna100}\label{tab:ab_eterna}
\setlength{\tabcolsep}{2.5pt}
\begin{tabular}{l|c|rr|r}
\hline
Method &Ablation & MFE$^\uparrow$\!\! & \!\! uMFE$^\uparrow$\!\! & Time$_\downarrow$ \\
\hline
X\ours & None & 76 & 74 & 127.5 \\
X\ours & RivalSearch & 74 & 71 & 126.0 \\
\hline
\ours &  None & 80 & 78 &  284.1 \\
\ours & RivalSearch & 78 & 76 & 276.1 \\
\hline
\end{tabular}
    \includegraphics[width=1.05\linewidth]{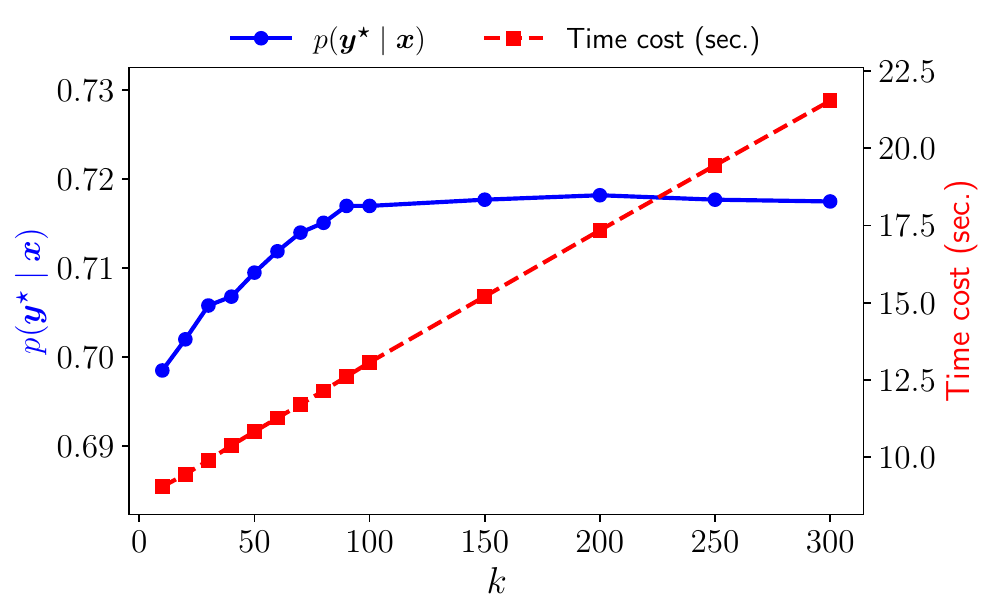}
    \vspace{-0.9cm}
    \captionof{figure}{Parameter of cube pruning size $k$ on RNAsolo.}\label{fig:param_k}
\end{minipage}
\end{figure*}
The results on RNAsolo764 are summarized in Table~\ref{tab:rnasolo}, and the results on ArchiveII are in Supplementary Section~\ref{sec:archiveii}.
\ours achieves the best performance across all metrics. For both $p(\ystar \mid \vecx)$ and \NED, it shows a clear improvement over all baseline methods except SAMFEO, which it still slightly surpasses.
In terms of \MFE and \UMFE success counts, \ours obtains the highest values (611 and 606). While NEMO also performs well on \MFE-based metrics, it falls behind significantly on ensemble-based measures.

Importantly, \ours is over 5× faster than SAMFEO and NEMO, highlighting the efficiency of our unified framework.
Moreover, X\ours achieves competitive design quality while being even faster—an order of magnitude quicker than most baselines. Given that RNAsolo structures are experimentally measured native structures, this suggests that in many biologically relevant scenarios, X\ours can provide high-quality results with extremely low computational cost. 
Note that NUPACK’s RNA folding parameters differ from those used in ViennaRNA. For a fair comparison, we implemented a special version of our method that uses NUPACK’s folding engine. The corresponding results are reported in Supplementary Section \ref{sec:results_nupack}.
\subsection{Artificial Structure Design}
Table~\ref{tab:eterna} reports results on Eterna100. The overall pattern is consistent with the RNAsolo results, demonstrating the robustness of our method across datasets.

\ours again achieves the best performance on every metric. Although SAMFEO produces high $p(\ystar \mid \vecx)$ values, \ours achieves an even higher overall score (0.589) while running more than twice as fast (284 vs. 673 seconds).
NEMO remains strong in \MFE-based success rates but continues to lag behind in ensemble-based metrics ($p(\ystar \mid \vecx)$ and \NED). The additional comparison with NEMO (w/o rules) further indicates that much of NEMO's strength comes from its hand-crafted domain rules.

On the other hand, X\ours is the fastest method among all competitors and still surpasses several baselines on various metrics.

\subsection{Ablation Study}\label{sec:ablation}
We ablated the component of rival structure search from our pipeline and report the results in ~\cref{tab:ab_rnasolo,tab:ab_eterna}. As we can see, the rival search consistently improved the number of solved structures under the \MFE and \UMFE metrics at a minor time cost. 
We also evaluated the effect of the cube pruning size $k$ within the divide–conquer–combine framework (Fig.~\ref{fig:param_k}). The design objective $p(\ystar \mid \vecx)$ improves as $k$ increases, saturating around $k=90$. The time cost increase grows nearly linearly with $k$.

\subsection{Long Structure Design }
\begin{table}[ht]
    \centering
    \captionof{table}{Comparison for long structure (16S-MFE) design.}\label{tab:16long}
    \renewcommand{\arraystretch}{1.1}
    \setlength{\tabcolsep}{2pt}
\resizebox{\linewidth}{!}{
    \begin{tabular}{c c cc rr}
    \hline
    \multirow{2}{*}{{ID}} & 
    \multirow{2}{*}{{Len}} & 
    \multicolumn{2}{c}{{Prob.~$p(\ystar \mid \vecx)$} $\uparrow$} &
    \multicolumn{2}{c}{{Time (secs)} $\downarrow$} \\
    \cline{3-6}
      & & {SAMFEO} & {X\ours} & {SAMFEO} & {X\ours} \\
    \hline
    1 & 950  & 0.267 & \textbf{0.365} & 17221.9 & \textbf{440.8} \\
    2 & 954  & 0.545 & \textbf{0.631} & 17121.9 & \textbf{640.4} \\
    3 & 1200 & 0.186 & \textbf{0.463} & 36174.7 & \textbf{2075.2} \\
    4 & 1452 & 0.017 & \textbf{0.117} & 65483.2 & \textbf{2544.6} \\
    5 & 1474 & 0.039 & \textbf{0.318} & 69006.3 & \textbf{9126.9} \\
    6 & 1474 & 0.203 & \textbf{0.507} & 75759.2 & \textbf{2645.4} \\
    7 & 1490 & 0.156 & \textbf{0.343} & 69896.0 & \textbf{3273.9} \\
    8 & 1492 & 0.216 & \textbf{0.518} & 72962.1 & \textbf{1301.3} \\
    9 & 1497 & 0.163 & \textbf{0.279} & 71942.7 & \textbf{1762.7} \\
    10 & 1525 & 0.062 & \textbf{0.396} & 71909.8 & \textbf{2317.2} \\
    \hline
{Avg.} & 1350.8 & 0.185 & \textbf{0.394} & 56747.8 & \textbf{2612.9} \\
    \hline
    \end{tabular}
}
\end{table}

Previous study~\citep{zhou+:2023samfeo} has shown that RNA design performance degrades sharply on the long 16SMFE structures, especially for folding probability $p(\ystar \mid \vecx)$.
For instance, SAMFEO achieves an average probability below 0.2, and other baselines produce near-zero probabilities. In addition, the cubic complexity of RNA folding causes their running times to grow rapidly with sequence length.

We therefore evaluate our method on the 16S-MFE benchmark, shown in Table~\ref{tab:16long}.
Compared to SAMFEO, X\ours doubles the folding probability, achieving an average of 0.394.
Furthermore, its runtime is an order of magnitude lower than SAMFEO.
These results further demonstrate the effectiveness and efficiency of our approach, particularly on challenging long-structure design tasks where existing methods struggle.

\vspace{-0.4cm}
\section{Related Work}

Many RNA design methods adopt local search, iteratively mutating and evaluating full sequences. Representative examples include RNAinverse~\citep{lorenz+:2011}, NEMO~\citep{portela:2018unexpectedly}, RNAiFold~\citep{garcia+:2013rnaifold}, and SAMFEO~\citep{zhou+:2023samfeo}.
Local mutations in these approaches are chosen through random sampling~\citep{lorenz+:2011, zhou+:2023samfeo}, manually crafted rules~\citep{portela:2018unexpectedly, taneda2011modena, rubio2018multiobjective}, or exhaustive enumeration~\citep{garcia+:2013rnaifold}.
Several machine-learning–based methods have also been proposed~\citep{runge2018learning, eastman2018solving, obonyo2022designing}, though their performance remains limited compared to state-of-the-art search-based algorithms.

Most prior work primarily targets the \MFE objective. NUPACK~\citep{Zadeh+:2010, wolfe2015sequence} emphasized the importance of ensemble-based metrics such as ensemble defect and introduced hierarchical structure decomposition for optimization. However, its decomposition largely occurs within long helices, and it redesigns leaf nodes rather than refining internal nodes, limiting flexibility. Other decomposition-based methods, such as RNA-SSD~\citep{andronescu2004new} and eM2dRNAs~\citep{rubio+:2023eM2dRNAs}, rely on heuristic rules for splitting structures and exhibit reduced design quality or computational efficiency.

To the best of our knowledge, our decomposition strategy is the first to be explicitly guided by motif-level designability, grounded in measurable RNA folding metrics (Eq.~\ref{eq:proof}). Moreover, our unified framework attains strong performance on both ensemble-based and \MFE-based objectives while maintaining high computational efficiency. These theoretical and empirical advantages set our approach apart from existing methods.

\vspace{-0.4cm}
\section{Conclusion and Future Work}

We propose a unified RNA design framework that achieves strong performance on both ensemble-based and \MFE-based objectives. At the motif level, our divide–conquer–combine strategy leverages designability theory together with an efficient cube pruning scheme to optimize folding probabilities. At the structure level, our rival-search procedure is rigorously derived from the necessary conditions of \MFE, enabling robust improvements in \MFE-based metrics.
Looking forward, several directions may further enhance our framework:
\begin{enumerate}
\item Improve the quality and diversity of rival structures to better guide the search for \MFE \& \UMFE designs.
\item Integrate  \ours-Fast and \ours-Full to achieve an improved balance between speed and design performance.
\end{enumerate}
\vspace{-0.8cm}

\bibliographystyle{natbib}
\bibliography{references}

\clearpage
\setcounter{section}{0}
\pagenumbering{roman}
\setcounter{figure}{0}
\renewcommand{\thefigure}{S\arabic{figure}}
\setcounter{table}{0}
\renewcommand{\thetable}{S\arabic{table}}
\renewcommand{\thesection}{S\arabic{section}}
\begin{center}
{\LARGE\bfseries Supplementary Information}
\end{center}

\section{Structural Loops}\label{sec:loops}

\begin{figure}[htbp]
	\centering
    \includegraphics[width=0.5\textwidth]{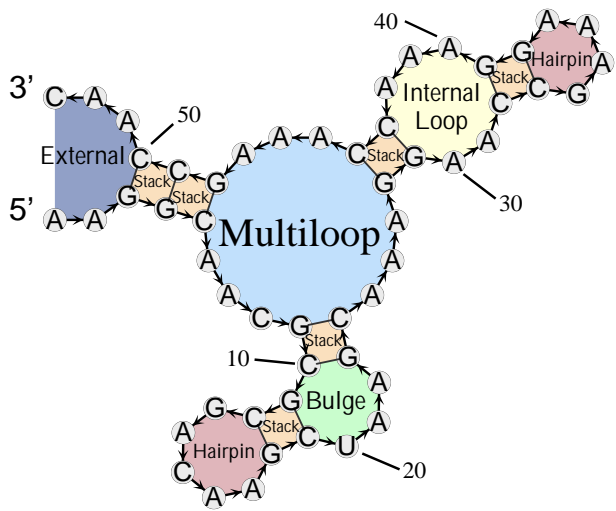}
    \captionof{figure}{An example of secondary structure and loops.} 
    \label{fig:loop}
\end{figure}

A secondary structure can be decomposed into a collection of loops, where each loop is usually a region enclosed by one or more base pairs. Depending on the number of pairs on the boundary, the main types of loops include hairpin loop, internal loop and multiloop, which are bounded by 1, 2, and 3 or more base pairs, respectively. In particular, the external loop is the most outside and is bounded by two ends ($5'$ and $3'$) and other base pair(s). Thus, each loop can be identified by a set of pairs. Fig.~\ref{fig:loop}  showcases an example of secondary structure with various types of loops, where some of the loops are notated as  

\begin{enumerate}
\item Hairpin: $H\blangle (12, 18) \brangle$.
\item Bulge: $B\blangle (10, 23), (11, 19) \brangle$.
\item Stack: $S\blangle (3, 50), (4, 49) \brangle$.
\item Internal Loop: $I\blangle (29, 43), (32, 39)\brangle$.
\item Multiloop: $M\blangle (5, 48), (9, 24), (28, 44)\brangle$.
\item External Loop: $E\blangle(3, 50) \brangle$.
\end{enumerate}

\section{Design Space/Constraint Intersection}\label{sec:intersection}

A (reduced) design space can be represented as a design constraint, which is defined as a set of indices $I$ and the specific nucleotides $\hat{\vecx}$ at these indices that must be satisfied. For convenience, we use $\hat{\vecx} = \vecx \proj I$ to denote the nucleotides
of a complete RNA sequence $\vecx$ that are ``projected" onto the indices $I$, which is described in Algorithm~\ref{alg:proj}. The intersection of two design constraints is described in Algorithm~\ref{alg:intersection}

\begin{table}[H]
\captionsetup{justification=centering}
\caption{Example of design constraint intersection: $C'_1, C'_2 = \contract(C_1, C_2)$\\ The composition for indicies (positions) $28, 29$ can only be $\nucG\nucC$ or $\nucG\nucU$}\label{tab:cc}
\makebox[\linewidth][c]{%
    \parbox[t]{.5\linewidth}{
        \centering
        \caption*{(a) Constraint $C_1$}
        \begin{tabular}{c|cccc}
            \hline
            	$I$ & 28 & 29 & 30 & 31 \\ \hline
		$\hat{\vecx}^1$ &G  & C  & C  & C  \\ \hline
		$\hat{\vecx}^2$ &G  & U  & C  & U  \\ \hline
		$\hat{\vecx}^3$ &G  & G  & U  & U  \\ \hline
		$\hat{\vecx}^4$ &G  & G  & U  & C  \\ \hline
        \end{tabular}
    }
}
    \\
 \makebox[\linewidth][c]{%
    \parbox[t]{.5\linewidth}{
        \centering
        \caption*{(b) Constraint $C_2$}
        
       \begin{tabular}{c|cccc}
            \hline
            	$I$ & 28 & 29 & 32 & 51 \\ \hline
		$\hat{\vecx}^1$ &G  & C  & A  & G  \\ \hline
		$\hat{\vecx}^2$ &G  & C  & A  & U  \\ \hline
		$\hat{\vecx}^3$ &G  & U  & U  & C  \\ \hline
		$\hat{\vecx}^4$ &A  & G  & U  & U  \\ \hline
        \end{tabular}
     }
 }
     \\
\makebox[\linewidth][c]{%
    \parbox[t]{.5\linewidth}{
        \centering
        \caption*{(c) Constraint $C'_1$}
        
       \begin{tabular}{c|cccc}
            \hline
            	$I$ & 28 & 29 & 30 & 31 \\ \hline
		$\hat{\vecx}^1$ &G  & C  & C  & C  \\ \hline
		$\hat{\vecx}^2$ &G  & U  & C  & U  \\ \hline
        \end{tabular}
     }
 }
\\
\makebox[\linewidth][c]{%
    \parbox[t]{.5\linewidth}{
        \centering
        \caption*{(d) Constraint $C'_2$}
        
       \begin{tabular}{c|cccc}
            \hline
            	$I$ & 28 & 29 & 32 & 51 \\ \hline
		$\hat{\vecx}^1$ &G  & C  & A  & G  \\ \hline
		$\hat{\vecx}^2$ &G  & C  & A  & U  \\ \hline
		$\hat{\vecx}^3$ &G  & U  & U  & C  \\ \hline
        \end{tabular}
     }
 }
\end{table}

\begin{algorithm}[ht]
\caption{Projection $\hat{\vecx} = \vecx \proj I$}\label{alg:proj}
\begin{algorithmic}[1]
    \Function{Projection}{$\vecx, I$} \Comment{$I = [i_1, i_2, \ldots, i_n]$ is a list of critical positions}
    \State $\hat{\vecx} \gets \text{map}()$
    \For{$i$ in $I$}
        \State $\hat{\vecx}[i] \gets \vecx_i$ \Comment{Project the $i$-th nucleotide to index $i$}
    \EndFor
    \State \Return $\hat{\vecx}$
    \EndFunction
\end{algorithmic}
\end{algorithm}

\begin{algorithm}[ht]
\caption{Contraint Intersection $C' = \text{Intersection}(C_1, C_2)$}\label{alg:intersection}
\begin{algorithmic}[1]
    \Function{Intersection}{$C_1, C_2$} \Comment{$C_1, C_2$ are sets of constraints}
     \State $(I_1, \setxi_1) \gets C_1$ \Comment{$I$ contains critical positions and $\setxi$ is a set of nucleotides compositions}
     \State $(I_2, \setxi_2) \gets C_2$
     \State $I' \gets I_1 \cap I_2$

     \If{$I' = \emptyset$} \Comment{No overlapping positions; return original constraints}
        \State \Return $C_1, C_2$ 
     \EndIf

     \State $\setxi'_1 \gets \{ \hat{\vecx} \proj I' \mid \hat{\vecx} \in \setxi_1 \}$
     \State $\setxi'_2 \gets \{ \hat{\vecx} \proj I' \mid \hat{\vecx} \in \setxi_2 \}$

     \For{$\hat{\vecx} \in \setxi_1$} \Comment{Remove nucleotides compositions from $\setxi_1$ that is not in $\setxi_2$}
        \If{$\hat{\vecx} \proj I' \notin \setxi'_2$}
        $\setxi_1 \gets \setxi_1 \setminus \{\hat{\vecx}\}$
        \EndIf
     \EndFor

     \For{$\hat{\vecx} \in \setxi_2$} \Comment{Remove nucleotides compositions from $\setxi_2$ that is not in $\setxi_1$}
        \If{$\hat{\vecx} \proj I' \notin \setxi'_1$}
        $\setxi_2 \gets \setxi_2 \setminus \{\hat{\vecx}\}$
        \EndIf
     \EndFor

     \State $C'_1 \gets (I_1,~\setxi_1)$
     \State $C'_2 \gets (I_2,~\setxi_2)$
     \State \Return $C'_1 \cup C'_2$     \Comment{Return updated constraints}
     \EndFunction
\end{algorithmic}
\end{algorithm}

\section{Divide-Conquer-Combine Framework}

The unified divide-conquer-combine is described in Algorithm~\ref{alg:dcc}, which calls Supp.~\cref{alg:samfeo-motif,alg:samfeo}.

\begin{algorithm*}[t]
\caption{\textsc{Divide-Conquer-Combine} Framework}
\label{alg:dcc}
\begin{algorithmic}[1]
\Function{DivideConqerAndCombine}{$\ystar$}
    \State $N_\rt \gets \textsc{Decompose}(\ystar)$\Comment{structure-level: Alg.~\ref{alg:decomp}}
    \State $X \gets \textsc{RecursiveConquerCombine}(N_\rt)$\Comment{root}
    \State $X_\prob, X_\NED, X_\MFE, X_\UMFE \gets \Call{SAMFEO}{\ystar, X, k_2, \step_2}$ \hspace{-0.1cm}\Comment{\text{\footnotesize Refine root (\ystar) designs via Supp.~Alg.~\ref{alg:samfeo}}}
    \State \Return $\vecx_\prob, \vecx_\NED, X_\MFE, X_\UMFE$\Comment{high-prob, low-NED, \MFE and \UMFE designs}
\EndFunction
\Function{RecursiveConquerCombine}{$N$}
    \If{$N$ is a leaf node}
        \State $X_\prob \gets \textsc{SAMFEO-Motif}(N.\vecm, \varnothing, k, \step_1)$ \Comment{Get leaf designs $X_\prob $ via Supp.~Alg.~\ref{alg:samfeo-motif}}
        \State \Return $X_\prob$
    \Else
        \State $\{N_1, N_2, \ldots, N_C\} \gets \textsc{Children}(N)$; $\{\vecm_1, \vecm_2, \ldots, \vecm_C\} \gets \{N_1.\vecm, N_2.\vecm, \ldots, N_C.\vecm\}$\Comment{Motifs of nodes}
        \For{$j = 1$ to $C$}
            \State $X_j \gets \textsc{RecursiveConquerCombine}(N_j)$
            \Comment{$X_j = \{ \vecx_1^j, \ldots, \vecx_k^j \}$ are the designs for  child $N_j$}
        \EndFor
        \State $X \gets \textsc{CubePruning}(X_1,...,X_C, \vecm_1,...,\vecm_C, k)$
        \Comment{$X$ denotes the designs for the parent $N.\vecm$}
        \If{$\nexists \vecx \in X$ \text{~s.t.~} \DG(\vecx, N.\vecm) = \MFE(\vecx)} \Comment{no \MFE designs in $X$?}
        \State $X_\prob \gets \textsc{SAMFEO-Motif}(N.\vecm, X, k, \step_r)$\Comment{Redesign}
        \EndIf
        \State \Return $X$
    \EndIf
\EndFunction

\end{algorithmic}
\end{algorithm*}

\section{Cube Pruning Algorithm}
The cube pruning algorithm of combining multiple motifs are show in Algorithm~\ref{alg:cubic_pruning_multi}.
\begin{algorithm}[htbp]
\caption{\textsc{CubePruning}$(\{X_d\}_{d=1}^D, \{\vec{m}_d\}_{d=1}^D, k)$ \Comment{version of combining multiple motifs}}
\label{alg:cubic_pruning_multi}
\begin{algorithmic}[1]

\Function{CubePruningMulti}{$\{X_d\}_{d=1}^D, \{\vec{m}_d\}_{d=1}^D, k$}
    \State $\vec{m}_c \gets \sum_{d=1}^D \vec{m}_d$ \Comment{Combine all motifs}
    \State Initialize a priority queue $\mathcal{Q}$
    \State Initialize an empty visited set $\mathcal{V}$
    \State Initialize result set $X_c \gets \varnothing$
    \State $\mathbf{i} \gets (1,1,\dots,1)$ \Comment{Multi-index for top-left cell}

    \State Compute $s = -\log p(\vec{m}_c \mid \sum_{d=1}^D \vec{x}^{(d)}_{1})$ by RNA folding
    \State Insert $(\mathbf{i}, s)$ into $\mathcal{Q}$ and mark $\mathbf{i}$ as visited

    \While{$|X_c| < k$ and $\mathcal{Q} \neq \varnothing$}

        \State $(\mathbf{i}, s) \gets$ \Call{PopBest}{$\mathcal{Q}$}
        \State Append the combined sequence $\sum_{d=1}^D \vec{x}^{(d)}_{i_d}$ to $X_c$ with score $s$

        \For{$d = 1$ to $D$}
            \State $\mathbf{i}' \gets \mathbf{i}$; increment $\mathbf{i}'[d] \gets \mathbf{i}[d] + 1$
            \If{$\mathbf{i}'$ within bounds and $\mathbf{i}' \notin \mathcal{V}$}
                \State Compute $s' = -\log p(\vec{m}_c \mid \sum_{d=1}^D \vec{x}^{(d)}_{i'_d})$
                \State Insert $(\mathbf{i}', s')$ into $\mathcal{Q}$
                \State Add $\mathbf{i}'$ to $\mathcal{V}$
            \EndIf
        \EndFor

    \EndWhile

    \State Sort $X_c$ by true score and return the top-$k$ results
    \State \Return $X_c$

\EndFunction
\end{algorithmic}
\end{algorithm}

\section{Rival Structure Search Algorithm}
The rival search algorithm is described in Algorithm~\ref{alg:rivalsearch}.
\begin{algorithm}[ht]
\caption{Rival Structure Search Algorithm (high-level version)}\label{alg:rivalsearch}
\begin{algorithmic}[1]
  \Statex $\mathcal{X}(\ystar < \vecy') = \{ \vecx \mid \DDG(\vecx, \ystar, \vecy') < 0\}$\Comment{design space: excluding sequences impossible for successful design}
   \Function{RivalStructureSearch}{$\ystar, \vecy$}\Comment{motif \ystar in a structure \vecy}
    \State $\yrival \gets \emptyset$\Comment{define a set of rival motifs}
    \State $X_\MFE \gets \emptyset$\Comment{a set of \MFE solutions}
    \State $X_\UMFE \gets \emptyset$\Comment{a set of \UMFE solutions}
    \While{$\bigcap_{\vecy' \in \yrival} \mathcal{X}(\ystar < \vecy') \neq \emptyset$ and $|\yrival| \leq N_r$} \Comment{design space is not empty; $N_r$: max. number of rivals }
	\For{$i = 1$ \textbf{to} $K$}
        		\State Draw $\vecx \in \bigcap_{\vecy' \in \yrival} \mathcal{X}(\ystar < \vecy')$
		\vspace{0.1cm}
		\If{$\DG(\vecx, \ystar) = \MFE(\vecx)$}
			\State $X_\MFE \gets X_\MFE \cup \{\vecx\}$
		\EndIf
		\If{$\forall \vecy \in \mathcal{Y}(\vecx), \DG(\vecx, \ystar) < \DG(\vecx, \vecy)$}
			\State $X_\MFE \gets X_\MFE \cup \{\vecx\}$
		\EndIf
		\If{$4^{|\D(\vecy', \ystar)|} < M_r$} 
        			 $\yrival \gets \yrival \cup \{\vecy''\}$ \Comment{limit the size of differential positions, $4^{|\D(\vecy', \ystar)|}$ denotes the required number of enumeration according to the differential positions between $\ystar$ and $\vecy'$~\cite{zhou+:2024undesignable}} 
		\EndIf 
        \EndFor
        \If{$\bigcap_{\vecy' \in \yrival} \mathcal{X}(\ystar < \vecy') = \emptyset$}
        	\vspace{0.1cm}
        		\State \Return $\ystar$ is {\color{blue} undesignable}
        \EndIf
    \EndWhile
    \State \Return $X_\MFE, X_\UMFE$    
    \EndFunction
\end{algorithmic}
\end{algorithm}

\section{Datasets Details}\label{sec:datasets}
Our experiments are conducted on a diverse set of RNA design benchmarks, including native structures, human-crafted structures, and structures derived from native RNA sequences.
\begin{enumerate}
\item \textbf{RNAsolo764.} RNAsolo~\citep{badura+:2025} is a comprehensive collection of RNA secondary structures curated from experimentally determined 3D structures~\citep{adamczyk+:2022rnasolo}, providing a challenging and biologically relevant benchmark for RNA design. To ensure quality and consistency, we deduplicated the structures and removed those containing prohibited loop types according to the standard folding rules of ViennaRNA. We further retained structures with a length of $\leq 400$~\emph{nt}. 
After preprocessing, 764 native RNA structures remained for our experiments. This dataset is available at \url{https://github.com/shanry/FastDesign/blob/main/data/rnasolo764.txt}.
%
%
\item \textbf{ArchiveII100}. ArchiveII~\citep{sloma+mathews:2016,Cannone+:2002} contains native RNA secondary structures spanning 10 families of naturally occurring RNAs, including tRNA, mRNA, rRNA, and others. It has been widely used for evaluating RNA folding methods~\citep{huang+:2019linearfold}. 
From the full dataset, we first removed all structures containing pseudoknots, as well as structures whose internal or multiloop configurations are prohibited by ViennaRNA. We also excluded structures that were previously proven to be undesignable~\cite{zhou+:2025scalable}. From the remaining pool, we selected 100 native structures such that no two structures have an edit distance less than or equal to 10, ensuring diversity within the benchmark.
This dataset is available at \url{https://github.com/shanry/FastDesign/blob/main/data/archiveii100.txt}.
\item \textbf{Eterna100.} The Eterna100 benchmark~\citep{anderson+:2016} consists of 100 secondary structures designed by human players of the online RNA puzzle game Eterna. Although it has become the most widely used benchmark in RNA design research recently, we note that Eterna100 structures often differ significantly from naturally occurring RNAs. Many are inspired by real-world shapes—such as “Chicken Feet” and “Gladius”—which enhances diversity but limits biological relevance.
\item \textbf{16S-MFE.} The 16S-MFE dataset~\citep{zhou+:2023samfeo} was introduced to assess the ability of RNA design methods to handle long RNA structures, since benchmarks like Eterna100 and Rfam generally include sequences shorter than 400~nt.
It contains structures folded from10 long sequences selected from 16S ribosomal RNAs in the ArchiveII database~\citep{sloma+mathews:2016,Cannone+:2002}. Each sequence was folded using the ViennaRNA 2.0 package, and the resulting minimum free energy (MFE) structures were used as design targets. This process ensures that the target structures are designable under MFE-based criteria while preserving biological relevance.
\end{enumerate}

\section{Experiment Settings}\label{sec:expsets}
Our algorithms are implemented in Python 3.11 (for the divide–conquer–combine stage) and C++ with g++ 11.5 (for the rival structure search).
All experiments, including baselines, were conducted on Linux (Rocky Linux 9.6) equipped with an AMD EPYC 9474F processor and 16~GB of memory.

We use unified hyperparameters for all variants of our method unless otherwise specified. The hyperparameters are as follows:

\begin{enumerate}
\item \textbf{Divide–conquer–combine stage.}
The number of steps for motif-level design is set to $\step_1 = 5000$ (Algorithm~\ref{alg:dcc}).
For internal node redesign which is invoked only when the combination condition is not met, we use $\step_r = 500$.
For root-level refinement, \ours uses $\step_2 = 2500$, while X\ours omits this stage with $\step_2 = 0$.
The cube pruning size is set to $k = 90$, as validated in the ablation study (Section~\ref{sec:ablation}).

\item \textbf{Rival structure search stage.}
We adopt the settings established in prior undesignability-identification work.
We set the maximum number of rival structures to $N_r = 100$, the maximum enumeration count to $M_r = 10^8$, and the maximum number of samples to $K = 500$ (Algorithm~\ref{alg:rivalsearch}).

\item \textbf{Parameters for SAMFEO and SAMFEO-MOTIF.}
For these baselines, we use the default parameters reported in the original SAMFEO work~\cite{zhou+:2023samfeo}.
\end{enumerate}

Most hyperparameters are fixed across all datasets. The primary tunable parameters are the number of steps $\step_1$ (SAMFEO-MOTIF), $\step_2$ (SAMFEO), and the cube pruning size $k$, which jointly control the trade-off between design quality and runtime.

By default, we decompose structures at the general easy-to-design motifs described in Section~\ref{sec:decompose}.
For long-structure design on 16S-MFE, we decompose only at helices due to their abundance in these large structures.

\section{Results on ArchiveII}\label{sec:archiveii}
\begin{table*}
\centering
\begin{threeparttable}
\captionof{table}{Results of RNA design methods on ArchiveII100.\tnote{a}}
\label{tab:archiveii}
\begin{tabular}{l|r|r|r|r|r}
\hline
 \multirow{2}{*}{Method}      
 & \multicolumn{2}{c|}{Best}  
 & \multicolumn{2}{c|}{\#Solved by Union}  
 & \multicolumn{1}{c}{Average} \\
 & $p(\ystar \mid \vecx)\uparrow$
 & $\text{NED}\downarrow$
 & MFE$\uparrow$ 
 & uMFE$\uparrow$ 
 & Time (sec.)$\downarrow$ \\
\hline
RNAinverse        & 0.0152 & 0.284 & 15 & 13 & \underline{188.6} \\ 
RNAinverse-pf     & 0.572 & 0.016 & 84 & 82 & 1143.0 \\ 
NEMO              & 0.225 & 0.035 & \textbf{91} & \underline{87} & {354.8} \\ 
m2dRNAs           & 0.305 & 0.036 & 85 & 82 & 773.0 \\ 
SAMFEO            & \underline{0.628} & \textbf{0.007} & 86 & 85 & 543.9 \\ 
\hline
X\ours         & 0.585 & \underline{0.013}  & \textbf{91} & \textbf{91} & \textbf{140.2} \\ 
\ours         & \textbf{0.658} & \textbf{0.007} & \textbf{91} & \textbf{91} & {424.4} \\
\hline
\end{tabular}
\begin{tablenotes}
\item[a] Bold = best, underline = second best.
\end{tablenotes}
\end{threeparttable}
\end{table*}

\section{Results with NUPACK's RNA Folding}\label{sec:results_nupack}
\begin{table}[H]
\centering
\captionof{table}{Results of RNA design methods on RNAsolo764.}\label{tab:results_nupack}
\begin{tabular}{|l|r|r|r|r|}
\hline
  \multirow{2}{*}{Method}      &  \multicolumn{2}{c|}{Best}  & \multicolumn{2}{c|}{ \#Solved by Union}  \\
 & $p(\ystar \mid \vecx)\uparrow$  & $\text{NED}\downarrow$& MFE$~\uparrow$ & uMFE$~\uparrow$  \\
\hline
NUPACK            & 0.535 & 0.027 & 509 & 493 \\ 
\hline
\ours  & {0.655} & {0.020} &  {557} &  {551} \\
\hline
\end{tabular}

\end{table}

\section{SAMFEO and Adapted SAMFEO}\label{sec:samfeo}

As SAMFEO~\citep{zhou+:2023samfeo} was originally developed for full-structure design, several key modifications are introduced to enable motif-level design.
\begin{enumerate}
\item The \textbf{input} is changed from a full RNA structure $\ystar$ to an individual target motif $\mstar$.
\item The \textbf{output} is changed from a complete RNA sequence to a partial sequence of the same length as $\ystar$. With a little notation abuse, $\vecx$ can denote either a complete RNA sequence or a partial RNA sequence depending on the context.
\item \textbf{Constrained folding} is used instead of unconstrained folding, since motif boundaries correspond to either fixed base pairs or the 5' \& 3' ends. Consequently, ensemble properties such as $p(\ystar | \vecx)$ and $\NED(\vecx, \ystar)$ are changed to $p(\mstar | \vecx)$ and $\NED(\vecx, \mstar)$ evaluated over a \emph{motif ensemble} rather than the full structural ensemble. The same principle is applied to the \MFE and \UMFE evaluation.
\item The original SAMFE has two variants:  $p(\ystar | \vecx)$ or $\NED(\vecx, \ystar)$ as the objective (fitness) function while \MFE \& \UMFE as byproducts. For adapted SAMFEO, we use $p(\ystar | \vecx)$ as the objective (fitness) function and produce \MFE \& \UMFE \& low-NED designs as byproducts.
\end{enumerate}

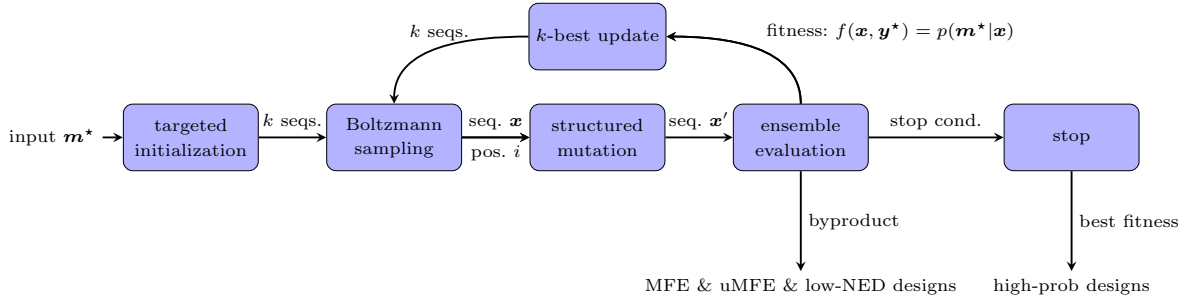
\begin{figure*}[t]
\centering
\vspace{-0.0cm}
\resizebox{0.99\textwidth}{!}{
	\begin{tikzpicture}[node distance=1cm]
	\node (input) [input] {input $\mstar$};
	\node (target_init) [process, right=of input, xshift=-0.7cm, align=center] {targeted \\ initialization};
	\node (boltzmann) [process, right=of target_init, align=center] {Boltzmann \\ sampling};
	\node (mutation) [process, right=of boltzmann, align=center] {structured \\ mutation};
	\node (k_best) [process, above=of mutation, yshift=-0.5cm] {$k$-best update};
	\node (ensemble) [process, right=of mutation, yshift=0cm, align=center] {ensemble \\ evaluation};
	\node (stop) [process, right=of ensemble, xshift=1cm, align=center] {stop};
	\node (output) [output, below=of ensemble, yshift=-0.5cm] {\MFE \& \UMFE \& low-\NED designs};
	\node (fbest)  [output, below=of stop, yshift=-0.5cm] {high-prob designs};

	\draw [arrow] (input) -- (target_init);
	\draw [arrow] (target_init) -- node[above] {$k$ seqs.} (boltzmann);
	\draw [arrow] (boltzmann) -- node[above] {seq. $\vecx$} (mutation);
	\draw [arrow] (boltzmann) -- node[below] {pos. $i$} (mutation);
	\draw [arrow] (mutation) -- node[above] {seq. $\vecx'$} (ensemble);
	\draw [arrow] (ensemble) -- node[above] {stop cond.} (stop);
	\draw [arrow] (ensemble) -- (output) node[midway,right] {byproduct};
	\draw [arrow] (stop) -- (fbest) node[midway,right] {best fitness};
	\draw [arrow] (ensemble) to[out=90, in=0] node[above] {\hspace{4cm}fitness: $f(\vecx, \ystar)=p(\mstar | \vecx)$} (k_best);
	\draw [arrow] (k_best) to[out=180, in=90] node[above] {$k$ seqs.} (boltzmann);
	\draw [arrow] (ensemble) to[out=90, in=0] (k_best);
	
	\end{tikzpicture}
}
\caption{Overview of adapted SAFEMO for motif design}\label{fig:motif_samfeo}
\end{figure*}

The workflow of the adapted SAMFEO is illustrated in Fig.~\ref{fig:motif_samfeo}. It follows the same iterative optimization cycle as the original algorithm, alternating between Boltzmann sampling, structured mutation, ensemble evaluation, and $k$-best update. However, all evaluations are restricted to the local motif ensemble rather than the global structure ensemble.
For detailed descriptions of the underlying optimization framework and scoring functions, we refer the reader to the original SAMFEO publication~\citep{zhou+:2023samfeo}.
The algorithms for SAMFEO and adapted SAMFEO are described in Algorithm~\ref{alg:samfeo} and Algorithm~\ref{alg:samfeo-motif} respectively.

\begin{algorithm}[t]
\caption{\textsc{SAMFEO}}
\label{alg:samfeo}
\begin{algorithmic}[1]
\Function{SAMFEO}{$\ystar$, $X_\text{init}$, $k_2$, $\step_2$}
    \State $X_{\MFE} \gets \varnothing$
    \State $X_{\UMFE} \gets \varnothing$
    \State $\vecx_{\NED} \gets \varnothing$
    \If{$X_\text{init}$ is $\varnothing$}
    	\State $X_\text{init} \gets \textsc{TargetedInitialization}(\ystar, k_2)$ \Comment{Initialize $k_2$ sequences via targeted initialization}
    \EndIf
    \State $X_{\prob} \gets X_\text{init}$
    \State $\vecx_{\prob} \gets \argmin_{\vecx \in X_\prob} p(\ystar \mid \vecx)$\Comment{The objective function is equilibrium probability}
    \For{$t = 1$ to $\step_2$}
        \State $\vecx \gets \textsc{SampleSequence}(X_{\prob})$  
        \State $i \gets \textsc{SamplePosition}(\vecx)$      
        \State $\vecx_{\new} \gets \textsc{StructuredMutation}(\vecx, i)$
        \State Evaluate $p(\ystar \mid \vecx_{\new})$ and $\NED(\vecx_{\new}, \ystar)$
        \If{$p(\ystar \mid \vecx_{\new}) > \min_{\vecx \in X_\prob} p(\ystar \mid \vecx)$}
        		\State  {Update} $X_{\prob}$ with $\vecx_{\new}$
        \EndIf
        \If{$p(\ystar \mid \vecx_{\new}) > p(\ystar \mid \vecx_{\prob})$}
        		\State $\vecx_{\prob} = \vecx_\new$
        \EndIf
        \If{$\NED(\vecx_{\new}, \ystar) < \NED(\vecx_{\NED}, \ystar)$}
        		\State $\vecx_{\NED} = \vecx_\new$
        \EndIf
        \If{$\ystar$ is an \MFE structure of $\vecx_\new$}
            \State $X_{\MFE} \gets X_{\MFE} \cup \{x_{\new}\}$
        \EndIf
        \If{$\ystar$ is the \UMFE structure of $\vecx_\new$}
            \State $X_{\UMFE} \gets X_{\UMFE} \cup \{x_{\new}\}$
        \EndIf
        \If{convergence condition is satisfied}
        	\State \textbf{break}
        \EndIf
    \EndFor
    
    \State \Return $\vecx_{\prob}, \vecx_\NED, X_{\MFE}, X_\UMFE$
\EndFunction
\end{algorithmic}
\end{algorithm}

\begin{algorithm}[t]
\caption{\textsc{SAMFEO-Motif}}
\label{alg:samfeo-motif}
\begin{algorithmic}[1]
\Function{SAMFEOMotif}{$\mstar$, $X_\text{init}$, $k_1$, $\step_2$}
    \State $X_{\MFE} \gets \varnothing$
    \State $X_{\UMFE} \gets \varnothing$
    \State $\vecx_{\NED} \gets \varnothing$
    \If{$X_\text{init}$ is $\varnothing$}
    	\State $X_\text{init} \gets \textsc{TargetedInitialization}(\mstar, k_1)$ \Comment{Initialize $k_1$ sequences via targeted initialization}
    \EndIf
    \State $X_{\prob} \gets X_\text{init}$
    \State $\vecx_{\prob} \gets \argmin_{\vecx \in X_\prob} p(\mstar \mid \vecx)$\Comment{The objective function is equilibrium probability}
    \For{$t = 1$ to $\step_2$}
        \State $\vecx \gets \textsc{SampleSequence}(X_{\prob})$  
        \State $i \gets \textsc{SamplePosition}(\vecx)$      
        \State $\vecx_{\new} \gets \textsc{StructuredMutation}(\vecx, i)$
        \State Evaluate $p(\mstar \mid \vecx_{\new})$ and $\NED(\vecx_{\new}, \mstar)$
        \If{$p(\mstar \mid \vecx_{\new}) > \min_{\vecx \in X_\prob} p(\mstar \mid \vecx)$}
        		\State  {Update} $X_{\prob}$ with $\vecx_{\new}$
        \EndIf
        \If{$p(\mstar \mid \vecx_{\new}) > p(\mstar \mid \vecx_{\prob})$}
        		\State $\vecx_{\prob} = \vecx_\new$
        \EndIf
        \If{$\NED(\vecx_{\new}, \mstar) < \NED(\vecx_{\NED}, \mstar)$}
        		\State $\vecx_{\NED} = \vecx_\new$
        \EndIf
        \If{$\mstar$ is an \MFE motif of $\vecx_\new$}
            \State $X_{\MFE} \gets X_{\MFE} \cup \{\vecx_{\new}\}$
        \EndIf
        \If{$\mstar$ is the \UMFE motif of $\vecx_\new$}
            \State $X_{\UMFE} \gets X_{\UMFE} \cup \{\vecx_{\new}\}$
        \EndIf
        \If{convergence condition is satisfied}
        	\State \textbf{break}
        \EndIf
    \EndFor
    
    \State \Return $\vecx_{\prob}, \vecx_\NED, X_{\MFE}, X_\UMFE$
\EndFunction
\end{algorithmic}
\end{algorithm}

\section{Structural Moitf}\label{sec:motif}

\begin{figure}[H]
    \centering
  \begin{subfigure}[b]{0.25\textwidth}
    \includegraphics[width=0.7\textwidth]{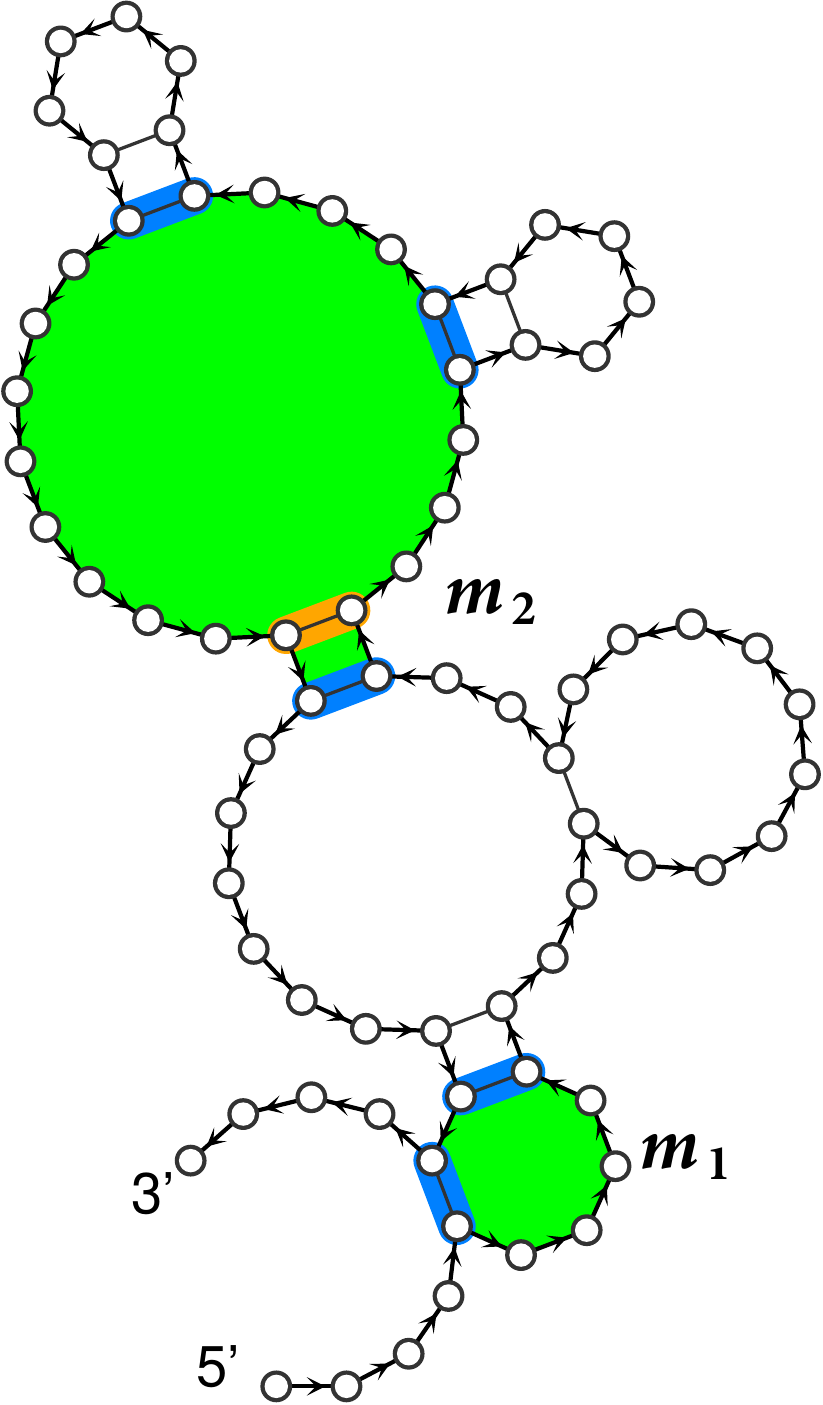}
  \end{subfigure}%
  \begin{subfigure}[b]{0.25\textwidth}
    \includegraphics[width=0.7\textwidth]{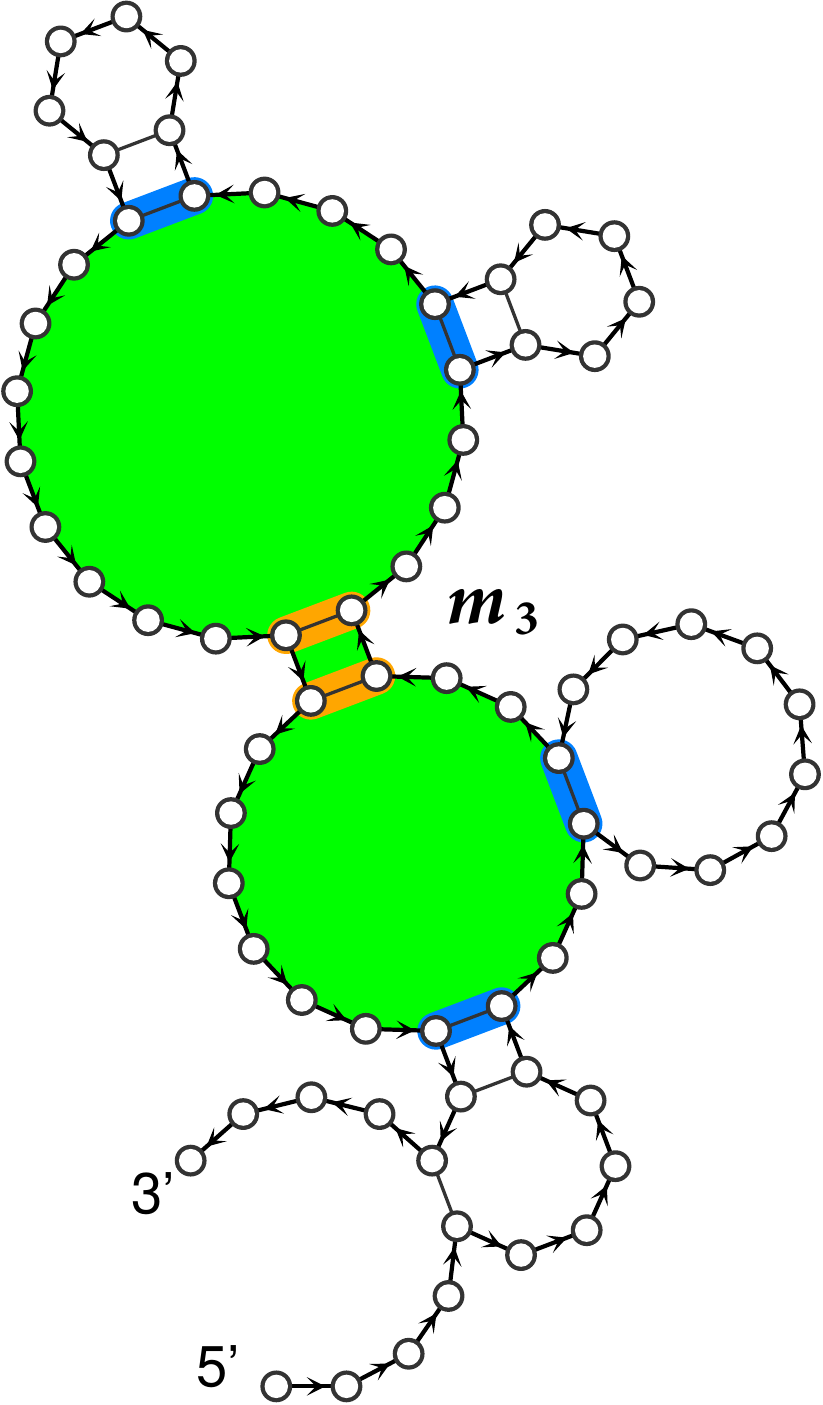}
  \end{subfigure}
  \caption{Motifs of various cardinalities (numbers of loops):
  $\card(\M_1)\!=\!1$, $\card(\M_2)\!=\!2$, $\card(\M_3)\!=\!3$.
  Loops are highlighted in green, internal pairs ($\ipairs$) in orange and {boundary pairs} ($\bpairs$) in blue.}\label{fig:card}
\end{figure}

\subsection{Motif is a Generalization of Structure}\label{subsec:motif}
\begin{definition}\label{def:motif-loops}
A \m \vecm is a contiguous (sub)set of loops in an RNA secondary structure \vecy, notated $\vecm\subseteq\vecy$.
 \end{definition}
 
 Many functions defined for secondary structures can also be applied to motifs. For example, $\LP(\M)$ represents the set of loops within a motif \M, while $\pairs(\M)$ and $\unpaired(\M)$ represent the sets of base pairs and unpaired positions, respectively. We define the \emph{cardinality} of \vecm  as the number of loops in \vecm , i.e., $\card(\vecm) = |\loops(\vecm)|$. Fig.~\ref{fig:card} illustrates three motifs, $\M_1, \M_2$, and $\M_3$, in a structure adapted from the Eterna puzzle ``\texttt{Cat's Toy}''. These motifs contain 1, 2, and 3 loops, respectively. We also define the \emph{length} of a motif $|\vecm|$ as the number of bases it contains, which is consistent with the length of a secondary structure $|\vecy|$.

Since motifs are defined as sets of loops, we can conveniently use set relations to describe their interactions. A motif $\M_A$ is a \emph{sub-motif} of another motif $\M_B$ if $\M_A$ is contained within $\M_B$, denoted as $\M_A \subseteq \M_B$. For the motifs in Fig.~\ref{fig:card}, we observe the relation $\M_2 \subseteq \M_3$. We further use $\M_A \subset \M_B$ to indicate that $\M_A$ is a proper sub-motif of $\M_B$, meaning $\M_A \neq \M_B$. Therefore, $\M_2 \subset \M_3$. The entire structure $\vecy$ can be regarded as the largest motif within itself, and accordingly, $\M \subseteq \vecy$ signifies that motif \M is a part of structure $\vecy$, with $\M \subset \vecy$ implying \M is strictly smaller than \vecy. 

The loops in a motif \M are connected by base pairs. Each base pair in $\pairs(\M)$ is classified as either an \emph{internal pair} linking two loops in \M or a \emph{boundary pair} connecting one loop inside \M to one outside. These two types of pairs in \M are denoted as disjoint sets $\ipairs(\M)$ and $\bpairs(\M)$, respectively:
\begin{align}
\ipairs(\M) \cap \bpairs(\M) = \emptyset,~\ipairs(\M) \cup \bpairs(\M) = \pairs(\M).
\end{align}
Utilizing the commonly accepted nearest neighbor model for RNA folding, it becomes evident that certain motifs may be absent from structures folded from RNA sequences. For instance, motif $\M_3$ in Fig.~\ref{fig:card} is considered undesignable, as the removal of its two internal pairs consistently reduces the free energy. This brings us to the definition of an  \emph{\um}.

\subsection{Motif Ensemble from Constrained Folding}
The designability of motifs is based on \emph{constrained folding}. Given a sequence $\vecx$, a structure in its ensemble $\vecy \in \mathcal{Y}(\vecx)$, we can conduct constrained folding by constraining the boundary pairs of $\M$, i.e., $\bpairs(\vecm)$. We generalize the concept of (structure) \emph{ensemble} to \emph{motif ensemble} as the set of motifs  that \vecx can possibly fold into (under the constraint $\bpairs(\vecm)$ being forced), denoted as $\mathcal{M}(\vecx, bpairs(\vecm))$. 
In the context of constrained folding, the folding outcomes are unaffected by the nucleotides at the constrained positions. Thus, with slight notation abuse, we use $\vecx$ to denote a partial sequence corresponding to a motif $\vecm$, where each position in $\vecx$ matches a position in $\vecm$, and vice versa. By this definition, the motif ensemble of a partial sequence $\vecx$ is denoted as $\mathcal{M}(\vecx)$. Similarly, the notation $\mathcal{X}(\vecm)$ generalizes $\mathcal{X}(\vecy)$ and represents all (partial) RNA sequences whose motif ensembles contain $\vecm$.
Motifs in $\mathcal{M}(\vecx, $\bpairs(\vecm)$)$ have the same boundary pairs, i.e., 
\begin{equation}
\forall \vecm', \vecm'' \in \mathcal{M}(\vecx), \bpairs(\vecm') = \bpairs(\vecm'') = \bpairs(\vecm). 
\end{equation}

The \emph{free energy change} of a motif $\vecm$ is the sum of the free energy of the loops in \vecm,

\begin{equation}\label{eq:emotif}
\DG(\vecx, \vecm) = \sum_{\vecz\in \loops(\vecm)}  \DG(\vecx, \vecz).
\end{equation} 

The definitions of \MFE and \UMFE can also be generalized to motifs via constrained folding.
\begin{definition}
A motif $\mstar \subseteq \vecy$ is an \MFE motif of folding $\vecx$ under constraint $\bpairs(\vecm)$ , i.e., $\MFE(\vecx, \bpairs(\vecm))$, if and only if
\begin{equation}
\forall \vecm \in \mathcal{M}(\vecx, \bpairs(\vecm))  \text{ and }\vecm \ne \mstar , \DG(\vecx, \mstar) \leq  \DG(\vecx, \vecm). \label{def:mfe-motif}
\end{equation}

\end{definition}
\begin{definition} \label{def:umfe-motif}
A motif $\mstar \subseteq \vecy$ is an \UMFE motif of folding $\vecx$ under constraint $\bpairs(\vecm)$ , i.e., $\UMFE(\vecx, \bpairs(\vecm))$, if and only if 
\begin{equation}
\forall \vecm \in \mathcal{M}(\vecx, \bpairs(\vecm))  \text{ and }\vecm \ne \mstar , \DG(\vecx, \mstar) <  \DG(\vecx, \vecm). \label{eq:umfe-motif} 
\end{equation}
\end{definition}

Similarly, the equilibrium probability of a sequence folding into the motif is defined as, 
\begin{equation}
p(\vecm \mid \vecx) = \frac{e^{-\DG(\vecx, \vecm)/RT}}{Q(\vecx)} = \frac{e^{-\DG (\vecx, \vecm)/RT}}{\sum_{\vecm' \in \mathcal{M}(\vecx, \bpairs(\vecm))}e^{-\DG(\vecx, \vecm')/RT}}. \label{eq:prob_m}
\end{equation}

\end{document}